\newtheorem{theorem}{Theorem}
\newtheorem{lemma}{Lemma}
\theoremstyle{definition}
\newtheorem{definition}{Definition}
\newtheorem{example}{Example}
\definecolor{fxtarget}{rgb}{0.0000,0.0000,0.7000}
\newcommand{\Prefix}{\mathsf{Prefix}}
\newcommand{\Substr}{\mathsf{Substr}}
\newcommand{\Suffix}{\mathsf{Suffix}}
\newcommand{\CDAWG}{\mathsf{CDAWG}}
\newcommand{\lrep}{\mathsf{lexp}}
\newcommand{\rrep}{\mathsf{rexp}}
\newcommand{\LeftM}{\mathsf{LeftM}}
\newcommand{\RightM}{\mathsf{RightM}}
\newcommand{\M}{\mathsf{M}}
\newcommand{\D}{\mathsf{d}}
\newcommand{\size}{\mathsf{e}}
\newcommand{\MSIns}{\mathsf{MS}_{\mathrm{Ins}}}
\newcommand{\MSDel}{\mathsf{MS}_{\mathrm{Del}}}
\newcommand{\MSSub}{\mathsf{MS}_{\mathrm{Sub}}}
\newcommand{\N}{\mathsf{N}}
\newcommand{\None}{\mathsf{N}_{\mathrm{1}}}
\newcommand{\Ntwo}{\mathsf{N}_{\mathrm{2}}}
\newcommand{\Nthree}{\mathsf{N}_{\mathrm{3}}}
\newcommand{\Nv}{\mathsf{N}_{\mathrm{3B}}}
\newcommand{\Q}{\mathsf{Q}}
\newcommand{\Qn}{\mathsf{Q}_{\mathrm{1}}}
\newcommand{\Qnotn}{\mathsf{Q}_{\mathrm{2}}}
\newcommand{\Nnotv}{\mathsf{N}_{\mathrm{3A}}}
\newcommand{\ed}{\mathsf{ed}}
\newcommand{\polylog}{\mathrm{polylog}}
\begin{document}

\title{Constant sensitivity on the CDAWGs}

\author[1]{Rikuya~Hamai}
\author[1]{Hiroto~Fujimaru}
\author[2]{Shunsuke~Inenaga}

\affil[1]{Department of Information Science and Technology, Kyushu University, Japan}
\affil[2]{Department of Informatics, Kyushu University, Japan}

\date{}
\maketitle

\begin{abstract}
\emph{Compact directed acyclic word graphs} (\emph{CDAWGs}) [Blumer et al. 1987] are a fundamental data structure on strings with applications in text pattern searching, data compression, and pattern discovery. Intuitively, the CDAWG of a string $T$ is obtained by merging isomorphic subtrees of the suffix tree [Weiner 1973] of the same string $T$, and thus CDAWGs are a compact indexing structure. In this paper, we investigate the sensitivity of CDAWGs when a single character edit operation is performed at an arbitrary position in $T$.
We show that the size of the CDAWG after an edit operation on $T$
is asymptotically at most 8 times larger than the original CDAWG before the edit.
\end{abstract}

\section{Introduction}

\emph{Compact directed acyclic word graphs} (\emph{CDAWGs})~\cite{Blumer1987} are a fundamental data structure on strings with applications in text pattern searching, data compression, and pattern discovery.
Intuitively, the CDAWG of a string $T$ (denoted $\CDAWG(T)$) is a minimal partial DFA that is obtained by merging isomorphic subtrees of the suffix tree~\cite{Weiner1973} of the same string $T$.
CDAWGs permit pattern matching in optimal $O(m + occ)$ time
for a pattern $P$ of length $m$ when $P$ occurs $occ$ times in $T$.
In practice, CDAWGs enjoy applications in natural language processing~\cite{Takeda2000} and in analysis of text generated by language models (LMs)~\cite{MerrillSE24}.
In a more theoretical perspective, CDAWGs are used as a space-efficient data structure that allows for optimal-time detection of ``unusual words'' (such as minimal absent words (MAWs)~\cite{Crochemore1998MAWdefinition} and minimal unique substrings (MUSs)~\cite{Ilie2011MUS}) from the input string~\cite{BelazzouguiC17,InenagaMAFF24,MienoI25}.

A substring $w$ of $T$ which occurs at least twice in $T$ is said to be a \emph{maximal repeat} of $T$,
if extending $w$ to the left or to the right in $T$ decreases
the number of occurrences in $T$.
It is known that there is a one-to-one correspondence between the internal nodes of $\CDAWG(T)$ and the maximal repeats in $T$.
Also, there is a one-to-one correspondence between the out-edges of
the internal nodes of $\CDAWG(T)$ and
the right-extensions of the maximal repeats in $T$.
Let $\size$ denote the size (i.e. the number of edges) of
the CDAWG for the input string.
It is known that $\size \leq 2n-2$ holds~\cite{Blumer1987} for any strings of length $n$,
and $\size$ can be much smaller for some highly repetitive strings:
$\size \in \Theta(\log n)$ holds for
Fibonacci words, Standard Sturmian words, and Thue-Morse words of length $n$~\cite{Rytter06,BaturoPR09,RadoszewskiR12}.
This contrasts to the suffix tree and the (uncompacted) directed acyclic word graph (DAWG)~\cite{Blumer1985} each requiring $\Theta(n)$ space
for any string of length $n$.
CDAWGs can thus be regarded as a compressed text indexing structure
which can be stored in $O(\size)$ space~\cite{BelazzouguiC17,Inenaga24} without explicitly storing the string.
In addition, a grammar compression of size $O(\size)$
based on the CDAWG exists~\cite{BelazzouguiC17}.

The \emph{sensitivity} of string compressors,
first proposed by Akagi et al.~\cite{AkagiFI2023},
measures how much a single-character-wise edit operation on the input string
can increase the size of the compressed string,
which is formalized as follows:
Let $C$ be a compression algorithm and let $C(T)$ denote
the size of the output of $C$ applied to the input string $T$.
The worst-case \emph{multiplicative sensitivity} of $C$ 
is defined by
$$\max_{T \in \Sigma^n, T' \in \Sigma^{n'}} \{C(T')/C(T) : \ed(T, T') = 1\},$$
where $\ed(T, T')$ denotes the edit distance between $T$ and $T'$,
$n' = n$ for substitutions, $n = n+1$ for insertions,
and $n' = n-1$ for deletions.
This is a natural measure for the robustness of compression algorithms
in terms of errors and/or dynamic changes occurring in the input string.
Such errors and dynamic changes are common in real-world scenario including
DNA sequencing and versioned document maintenance.

Following the earlier work of Lagarde and Perifel~\cite{LagardeP18}
and Akagi et al.~\cite{AkagiFI2023},
string compressors and repetitiveness measures can be categorized into three classes:
\begin{description}
  \item[(A) Stable:] Those whose sensitivity is $O(1)$;
  \item[(B) Changeable:] Those whose sensitivity is $\polylog(n)$;
  \item[(C) Catastrophic:] Those whose sensitivity is $O(n^c)$ with some constant $0 < c \leq 1$.
    
\end{description}
For instance, it is shown in~\cite{AkagiFI2023} that
Class (A) includes
the substring complexity~\cite{KociumakaNP23},
the smallest macro scheme~\cite{StorerS82},
the Lempel-Ziv 77 families~\cite{LZ77,StorerS82},
and the smallest grammar~\cite{Rytter03,CharikarLLPPSS05}.
On the other hand, 
Class (B) includes run-length Burrows-Wheeler transform (RLBWT)~\cite{AkagiFI2023,GiulianiILPST21,GiulianiILRSU23},
and the Lempel-Ziv 78~\cite{LZ78} belongs to Class (C)~\cite{LagardeP18}.

The focus of this present article is to analyze the sensitivity of CDAWGs.
In case where the edit operation to the string $T$ is performed at either end of $T$,
then the multiplicative sensitivity of CDAWGs is known to be asymptotically at most $2$, and it is tight~\cite{InenagaHSTAMP05,FujimaruNI25}.
However, the general case with an arbitrary single-character-wise edit on $T$ was not well understood for CDAWGs.
In this paper, we prove that any edit operation at an arbitrary position
on the string can increase the size of the CDAWG asymptotically \emph{at most 8 times larger} than the original, implying that CDAWGs belong to Class (A).
We emphasize that the only known upper bound
for the sensitivity of CDAWGs is $O(n / \log n)$,
which trivially follows since $\size \in O(n)$ and $\size \in \Omega(\log n)$ for any string of length $n$~\cite{Blumer1987,BelazzouguiC17}.
Our technique for proving the constant sensitivity of CDAWGs is purely combinatorial, which involves new and original ideas that were not present in the special case of left/right-end edits~\cite{InenagaHSTAMP05,FujimaruNI25}.
Also, all our arguments hold without a common sentinel symbol $\$$ at the right-end of strings.

Although the CDAWG size $\size$ is regarded as a weak string repetitiveness measure
(as $\size \in \Theta(n)$ for string $\mathtt{a}^{n-1}\mathtt{b}$~\cite{Blumer1987}, and reversing the string can increase $\size$ by a factor of $O(\sqrt{n})$~\cite{InenagaK24} for some string),
our result shows another virtue of CDAWGs being stable in terms of sensitivity and thus being robust against errors and edits.
To our knowledge, CDAWGs are the first compressed indexing structure
proven to achieve $O(1)$ multiplicative sensitivity.

\section{Preliminaries}

\subsection{Strings}

Let $\Sigma$ be an \emph{alphabet} of size $\sigma$.
An element of $\Sigma^*$ is called a \emph{string}.
For a string $T \in \Sigma^*$, the length of $T$ is denoted by $|T|$.
The \emph{empty string}, denoted by $\varepsilon$, is the string of length $0$.
For any non-negative integer $n \geq 0$,
let $\Sigma^n$ denote the set of strings of length $n$.
For any two strings $S$ and $T$,
let $\ed(S,T)$ denote the edit distance between $S$ and $T$.
For any string $T$ and a non-negative integer $\ell \geq 0$,
let $\mathcal{K}(T,\ell) = \{S \mid \ed(S,T) = \ell\}$.

For string $T = uvw$, $u$, $v$, and $w$ are called a \emph{prefix}, \emph{substring},
and \emph{suffix} of $T$, respectively.
The sets of prefixes, substrings, and suffixes of string $T$ are denoted by
$\Prefix(T)$, $\Substr(T)$, and $\Suffix(T)$, respectively.
For a string $T$ of length $n$, $T[i]$ denotes the $i$th character of $T$
for $1 \leq i \leq n$,
and $T[i..j] = T[i] \cdots T[j]$ denotes the substring of $T$ that begins at position $i$ and ends at position $j$ on $T$ for $1 \leq i \leq j \leq n$.

\subsection{Maximal substrings and maximal repeats}

A substring $w \in \Substr(T)$ of $T$ is said to be
\emph{left-maximal} if (1) $w \in \Prefix(T)$
or (2) there exist two distinct characters $a,b \in \Sigma$ such that
$aw, bw \in \Substr(T)$,
and it is said to be \emph{right-maximal}
if (1) $w \in \Suffix(T)$ or (2) there exist two distinct characters $a,b \in \Sigma$ such that $wa, wb \in \Substr(T)$.
These substrings $w$ that occur at least twice in $T$
are also called \emph{left-maximal repeats}
and \emph{right-maximal repeats} in $T$, respectively.

Let $\LeftM(T)$ and $\RightM(T)$ denote the sets of left-maximal
and right-maximal substrings in $T$.
Let $\M(T) = \LeftM(T) \cap \RightM(T)$.
The elements in $\M(T)$ are called \emph{maximal substrings} in $T$,
and the elements in $\M(T) \setminus \{T\}$ are called
\emph{maximal repeats} in $T$.
A character $a \in \Sigma$ is said to be 
a \emph{right-extension} of a maximal repeat $w$ of $T$
if $wa \in \Substr(T)$.


For any substring $w$ of a string $T$,
we define its \emph{left-representation} and \emph{right-representation}
by $\lrep_T(w) = \alpha w$ and $\rrep_T(w) = w \beta$,
where $\alpha, \beta \in \Sigma^*$ are the shortest strings
such that $\alpha w$ is left-maximal in $T$
and $w\beta$ is right-maximal in $T$, respectively.

\subsection{CDAWGs}

The \emph{compact directed acyclic word graph} (\emph{CDAWG}) of a string $T$,
denoted $\CDAWG(T)$,
is the minimal DFA that recognizes all substrings of $T$,
in which each transition (edge) is labeled by a non-empty substring of $T$.
$\CDAWG(T)$ has a unique source that represents the empty string $\varepsilon$
and a unique sink that represents $T$.
All the other internal nodes represent the maximal repeats in $T$,
namely, the set of the longest strings represented by the nodes of $\CDAWG(T)$ are equal to $\M(T)$.
See Figure~\ref{fig:cdawg} for a concrete example of CDAWGs.

The \emph{size} of $\CDAWG(T)$ for a string $T$ of length $n$
is the number $\size(T)$ of edges in $\CDAWG(T)$,
which is equal to the number of right-extensions of maximal repeats in $T$.

In what follows, we will identify
maximal substrings with CDAWG nodes,
and right-extensions of maximal repeats with CDAWG edges,
respectively.

For each $x \in \M(T)$,
let $\D_T(x)$ denote the number of out-edges of of node $x$ in $\CDAWG(T)$.
It is clear that $\size(T) = \sum_{x \in \M(T)} \D_T(x)$.

\begin{figure}[tbh]
  \centering
  \includegraphics[keepaspectratio,scale=0.35]{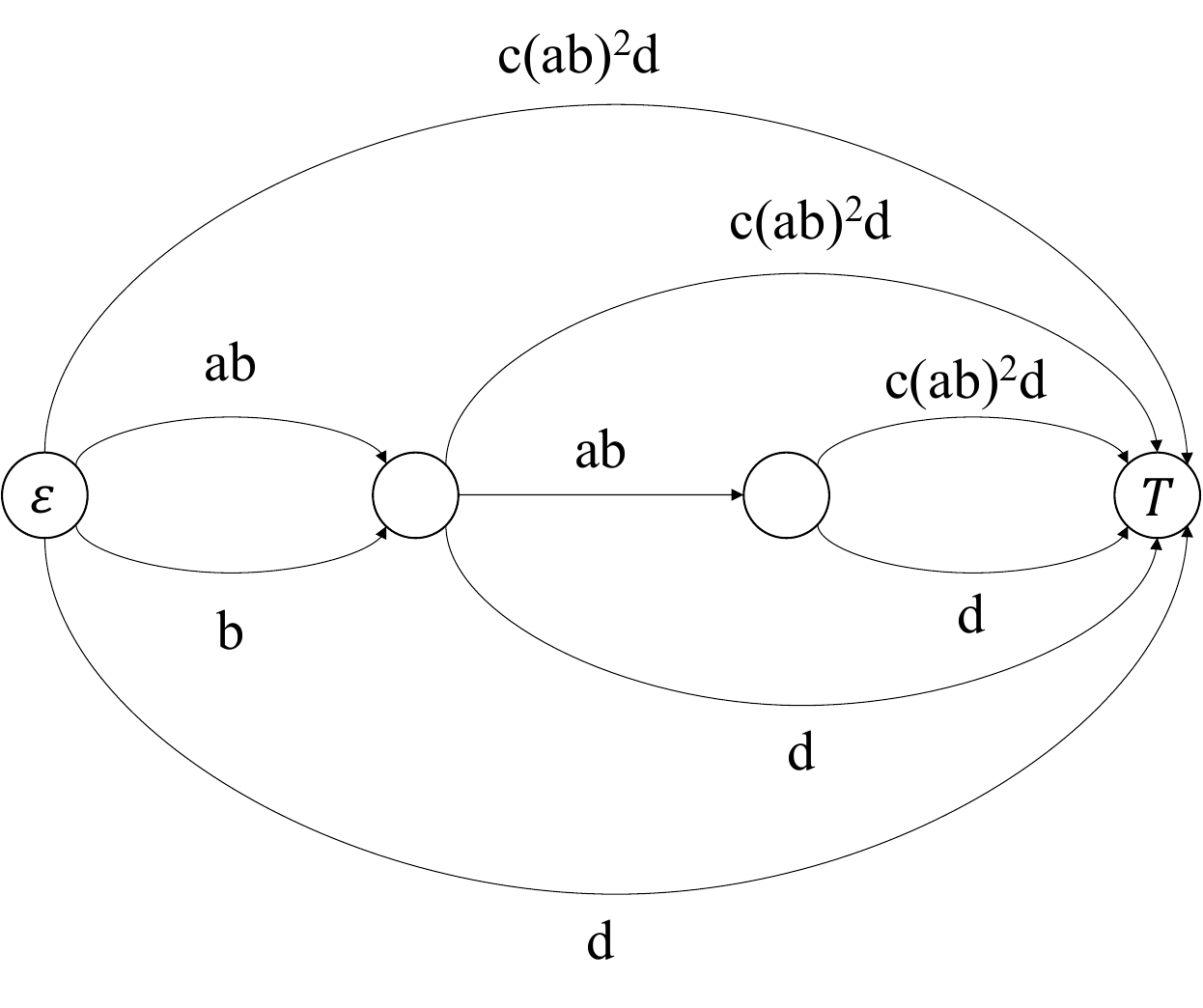}
  \caption{Illustration for $\CDAWG(T)$ of string $T=\mathrm{(ab)^2 c(ab)^2d}$. The longest strings represented by the nodes of $\CDAWG(T)$ are the maximal substrings in $\M(T) = \{\mathrm{\varepsilon, ab, (ab)^2, (ab)^2 c(ab)^2d}\}$.}
  \label{fig:cdawg}
\end{figure}

\subsection{Sensitivity of CDAWG size and our results}

Using the measure $\size$, we define
the worst-case multiplicative \emph{sensitivity} of the CDAWG
with edit operations (resp. insertion, deletion, and substitution) by:
\begin{eqnarray*}
  \MSIns(\mathsf{\size}, n) & = & \max_{T \in \Sigma^n, T' \in \mathcal{K}(T,1) \cap \Sigma^{n+1}} \{\size(T')/ \size(T)\}, \\
  \MSDel(\size, n) & = & \max_{T \in \Sigma^n, T' \in \mathcal{K}(T,1) \cap \Sigma^{n-1}} \{\size(T')/\size(T)\}, \\
  \MSSub(\size, n) & = & \max_{T \in \Sigma^n, T' \in \mathcal{K}(T, 1) \cap \Sigma^{n}} \{\size(T')/\size(T)\}.
\end{eqnarray*}

We will prove the following:
\begin{theorem} \label{theo:CDAWG_sensitivity}
  For any string $T$ of length $n$,
  $\MSIns(\mathsf{\size}, n) \leq (8\size+4)/\size$,
  $\MSDel(\mathsf{\size}, n) \leq (8\size+4)/\size$,
  $\MSSub(\mathsf{\size}, n) \leq (8\size+4)/\size$ hold,
  where $\size = \sum_{x \in \M(T)}\D_T(x)$.
\end{theorem}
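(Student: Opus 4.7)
The plan is to bound $\size(T')$ by partitioning the maximal substrings in $\M(T')$ into a few structural classes, each of which can be charged (with bounded multiplicity) to edges of $\CDAWG(T)$; the final factor $8$ then arises from summing the per-class charging ratios, while the additive $+4$ absorbs the source and sink together with a small constant number of ``boundary'' nodes concentrated near the edit position. Throughout, I would treat insertion, deletion, and substitution uniformly since each modifies $T$ at a single position $i$, leaving $T[1..i-1]$ and an appropriate suffix of $T$ untouched; this reduces the problem to a local analysis of how maximal repeats and their right-extensions can change in a neighborhood of $i$.

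Concretely, I would classify every $w \in \M(T')$ as follows: $\None$ consists of nodes with $w \in \M(T)$; $\Ntwo$ consists of nodes with $w \in \Substr(T) \setminus \M(T)$; and $\Nthree$ consists of nodes with $w \notin \Substr(T)$. For an $\None$-node $w$, every out-edge $wa$ with $wa \in \Substr(T)$ already corresponds to an out-edge of $w$ in $\CDAWG(T)$, so the only ``new'' out-edges come from right-extensions whose label meets position $i$; this contributes at most $\D_T(w) + O(1)$ edges per node and hence at most $\size(T) + O(1)$ in total over $\None$. For an $\Ntwo$-node $w$, the fact that $w$ is a substring of $T$ but not left- or right-maximal there lets me map $w$ canonically to a nearby maximal substring in $T$ via $\lrep_T$ or $\rrep_T$, and the edit-position condition forces only $O(1)$ $\Ntwo$-nodes to share the same image, giving again an $O(1)$ charge per edge of $\CDAWG(T)$.

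For $\Nthree$-nodes, every occurrence of $w$ in $T'$ must cross position $i$, so the set $\Nthree$ forms a chain in the suffix-link tree of $T'$. Following the split suggested by the paper's macros, I would further partition $\Nthree = \Nnotv \sqcup \Nv$ into ``interior'' versus ``endpoint'' chain nodes and, in parallel, partition the right-extensions into $\Qn$ and $\Qnotn$ according to whether $wa \in \Substr(T)$: interior nodes carry exactly the extension continuing the chain and can be matched one-for-one to edges of $\CDAWG(T)$, while endpoint nodes contribute a bounded constant. The main technical obstacle I anticipate is the bookkeeping for $\Nthree$: even though the chain itself may be long, a single $\Nthree$-node can have several right-extensions of both types, and the key lemma required is that the total number of $\Qnotn$-edges is bounded by a small constant times the number of nodes in $\M(T)$ whose occurrences touch position $i$. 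Establishing this lemma with constants tight enough to yield the factor $8$ — rather than a larger absolute constant — is the crux of the argument, after which summing the four contributions from $\None, \Ntwo, \Nnotv, \Nv$ and adding the constant source/sink/boundary term produces the stated bound $(8\size+4)/\size$ uniformly for all three edit operations.
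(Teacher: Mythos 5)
Your proposal identifies the right high-level strategy (bound each class of nodes in $\M(T')$ by charging to edges of $\CDAWG(T)$ via an injective-ish correspondence), but the partition you choose and, more importantly, the mechanism you invoke for the correspondence do not match what is actually needed, and the genuinely hard step is left unsolved.

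First, a notational mismatch that reflects a structural difference: the paper's $\None$, $\Ntwo$, $\Nthree$ are subsets of the \emph{new} maximal repeats $\N = (\M(T')\setminus\M(T))\setminus\{T'\}$, split by whether $x$ is right-maximal, left-maximal, or neither in $T$ --- not by whether $x\in\M(T)$ or $x\in\Substr(T)$. Your $\None$ (``nodes already in $\M(T)$'') is really the paper's $\Q$, your $\Ntwo$ (``$\Substr(T)\setminus\M(T)$'') straddles the paper's $\None$ and $\Ntwo$, and your $\Nthree$ (``$\notin\Substr(T)$'') is strictly contained in the paper's $\Nv$. This matters because the paper's split is exactly what makes its injective maps $U$, $H$, $I$, $K$ well-defined and provably injective: knowing $x\notin\LeftM(T)$ or $x\notin\RightM(T)$ is the hypothesis used in Lemmas~\ref{lem:sp123}--\ref{lem:sp45} and \ref{lem:sp124}--\ref{lem:sp35} to derive a contradiction when two distinct $x,y$ would collide. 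Your split by substring membership gives you no such handle.

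Second and more seriously, the crux you explicitly flag --- that ``only $O(1)$ nodes share the same image'' under the $\lrep_T/\rrep_T$ map --- is precisely the technical content of the paper, and you give no argument for it. The naive map $x\mapsto\lrep_T(\rrep_T(x))$ is \emph{not} injective in general; the paper gets around this by mapping not $x$ itself but a carefully chosen piece of a \emph{crossing occurrence} of $x$ (one of $S_{x_L}, S_{x_R}, P_{x_L}, P_{x_R}$, or $J_x$), and the choice of which piece to use (the case split over Types (i)--(v) and the $L$ vs.\ $R$ occurrence) is engineered exactly so that the collision lemmas go through. Without this there is no reason the charging has bounded multiplicity, and the whole sum can blow up.

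Third, the ``$\Nthree$ forms a chain in the suffix-link tree'' claim is unjustified. Two distinct strings $w,w'$ whose occurrences all cross position $i$ need not be suffix-related, so there is no chain to exploit, and the paper does not use any such structure: for its $\Nv$ it extracts the middle piece $J_x$, shows $J_x\neq J_y$ for $x\neq y$ via periodicity (Lemma~\ref{lem:J_xJ_y}), and then maps $J_x\mapsto K(x)=\lrep_T(\rrep_T(J_x))$. Relatedly, your reading of $\Qn,\Qnotn$ as a partition of right-extensions is off; they partition the surviving nodes $\Q$ by whether their out-degree increases, and $\Qnotn$ is handled trivially --- the difficulty is with $\Qn$ and the new nodes, not with $\Qnotn$ as you suggest. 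Finally, the additive $+4$ in the theorem does not come from ``source and sink''; it comes from the two $3\size+2$ bounds (Lemmas~\ref{lem:dt1} and~\ref{lem:dt2}), where the $+2$ in each accounts for the possibility that the image under $U$, $H$ or $I$ equals $T$ itself (so $\D_T(\cdot)=0$), which by injectivity happens for at most one $x$ per map. As written, your argument establishes $O(1)$ sensitivity only modulo the unproved injectivity claim, which is the entire point.
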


Our proof for Theorem~\ref{theo:CDAWG_sensitivity} 
handles all cases of insertions, deletions, and substitutions.

\section{Occurrences of maximal repeats crossing the edited position}


To present an upper bound for the sensitivity of CDAWG size,
it is essential to consider new occurrences of maximal repeats
that contain or touch the edited position $i$.
In this section, we introduce several new definitions regarding
those new occurrences of maximal repeats.

For ease of discussions, we sometimes identify an interval where
a substring $w$ occurs in $T'$ with the substring $w$ itself,
when there is no risk of confusion.



\subsection{Crossing occurrences}

\begin{definition}\rhnote*{changed}{%
  Let $x = T'[j..k]$ be a non-empty substring of $T'$ that \emph{touches or contains}
  the edited position $i$.
  That is, if the edit operation is insertion and substitution,
  (1) $k = i-1$ (touching $i$ from left),
  (2) $j \leq i \leq k$ (containing $i$), or
  (3) $j = i+1$ (touching $i$ from right).
  If the edit operation is deletion, 
  (1) $k = i-1$ (touching $i$ from left),
  (2) $j \leq i-1 \land i \leq k$ (containing $i$), or
  (3) $j = i$ (touching $i$ from right).
  }%
  These occurrences of a substring $x$ in $T'$
  are said to be \emph{crossing occurrences} for
  the edited position $i$.
  We will call these occurrences simply as crossing occurrences of $x$.
  \end{definition}

\rhnote*{added "of x" and changed "the original string T" to "T'"}{%
We denote the left most crossing occurrence $T'[j'..k']$ of $x$ as $x_L$.
For $x_L$, we consider the following substrings $P_{x_L}$ and $S_{x_L}$
of $T'$ (see Figure~\ref{fig:x_Lp_L} for illustration):
}%

In the case that the edit operation is insertion or substitution, let
\begin{eqnarray*}
P_{x_L} & = & 
    \begin{cases}
     T'[j'..i] & \mbox{if $x_L$ touches $i$ from left or contains $i$,}\\
     \varepsilon & \mbox{if $x_L$ touches $i$ from right,}
    \end{cases}\\
S_{x_L} & = &
    \begin{cases}
     \varepsilon & \mbox{if $x_L$ touches $i$ from left,}  \\
     T'[i..k'] & \mbox{if $x_L$ contains $i$ or touches $i$ from right.}
    \end{cases}
\end{eqnarray*}
In the case that the edit operation is deletion, let
\begin{eqnarray*}
P_{x_L} & = &
    \begin{cases}
     T'[j'..i-1] & \mbox{if $x_L$ touches $i$ from left or contains $i$,}\\
     \varepsilon & \mbox{if $x_L$ touches $i$ from right,}
    \end{cases}\\
S_{x_L} & = &
    \begin{cases}
     \varepsilon & \hspace*{6mm} \mbox{if $x_L$ touches $i$ from left,}  \\
     T'[i..k'] & \hspace*{6mm} \mbox{if $x_L$ contains $i$ or touches $i$ from right.}
    \end{cases}
\end{eqnarray*}
We define the rightmost crossing occurrence $x_R$, together with $P_{x_R}$ and $S_{x_R}$, analogously.



\begin{figure}[tbh]
  \centering
  \includegraphics[keepaspectratio,scale=0.35]{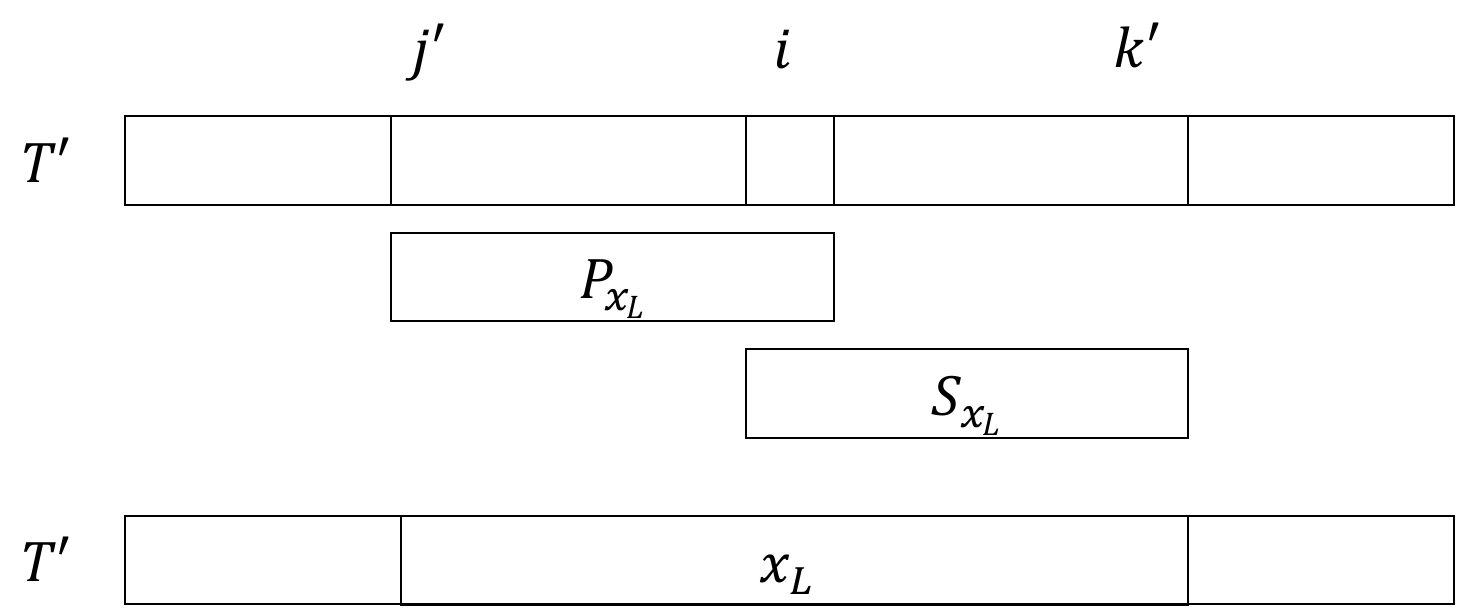}
  \caption{Illustration of $x_L$ in $T'$ for the case where $x_L$ contains $i$, with insertion and substitution.}
  \label{fig:x_Lp_L}
\end{figure}

\begin{definition}
  We categorize strings $x$ that have crossing occurrence(s) in the edited string in $T'$ into the five following types, depending on the properties of $x$:
  \begin{description}
    \item {Type (i)}: $x$ has only one crossing occurrence of $x$ in $T'$.
    \item {Type (ii)}: 
      \begin{enumerate}
        \item $x$ has two or more crossing occurrences of $x$ in $T'$.
        \item If all occurrences of $x$ in $T'$ are crossing occurrences of $x$ in $T'$, then $x \notin \LeftM(T')$ and $x \notin \RightM(T')$.
      \end{enumerate}
    \item {Type (iii)}:
      \begin{enumerate}
        \item $x$ has two or more crossing occurrences of $x$ in $T'$.
        \item If all occurrences of $x$ in $T'$ are crossing occurrences of $x$ in $T'$, then $x \notin \LeftM(T')$ and $x \in \RightM(T')$.
      \end{enumerate}
    \item {Type (iv)}:
      \begin{enumerate}
        \item $x$ has two or more crossing occurrences of $x$ in $T'$.
        \item If all occurrences of $x$ in $T'$ are crossing occurrences of $x$ in $T'$, then $x \in \LeftM(T')$ and $x \notin \RightM(T')$.
      \end{enumerate}
    \item {Type (v)}
      \begin{enumerate}
        \item $x$ has two or more crossing occurrences of $x$ in $T'$.
        \item If all occurrences of $x$ in $T'$ are crossing occurrences of $x$ in $T'$, then $x \in \M(T')$.
      \end{enumerate}
  \end{description}  
\end{definition}

Figure~\ref{fig:case} illustrates the aforementioned five types of a string $x$ when $x_L \notin \Prefix(T')$ and $x_R \notin \Suffix(T')$.

\begin{figure}[H]
  \centering
  \includegraphics[keepaspectratio,scale=0.35]{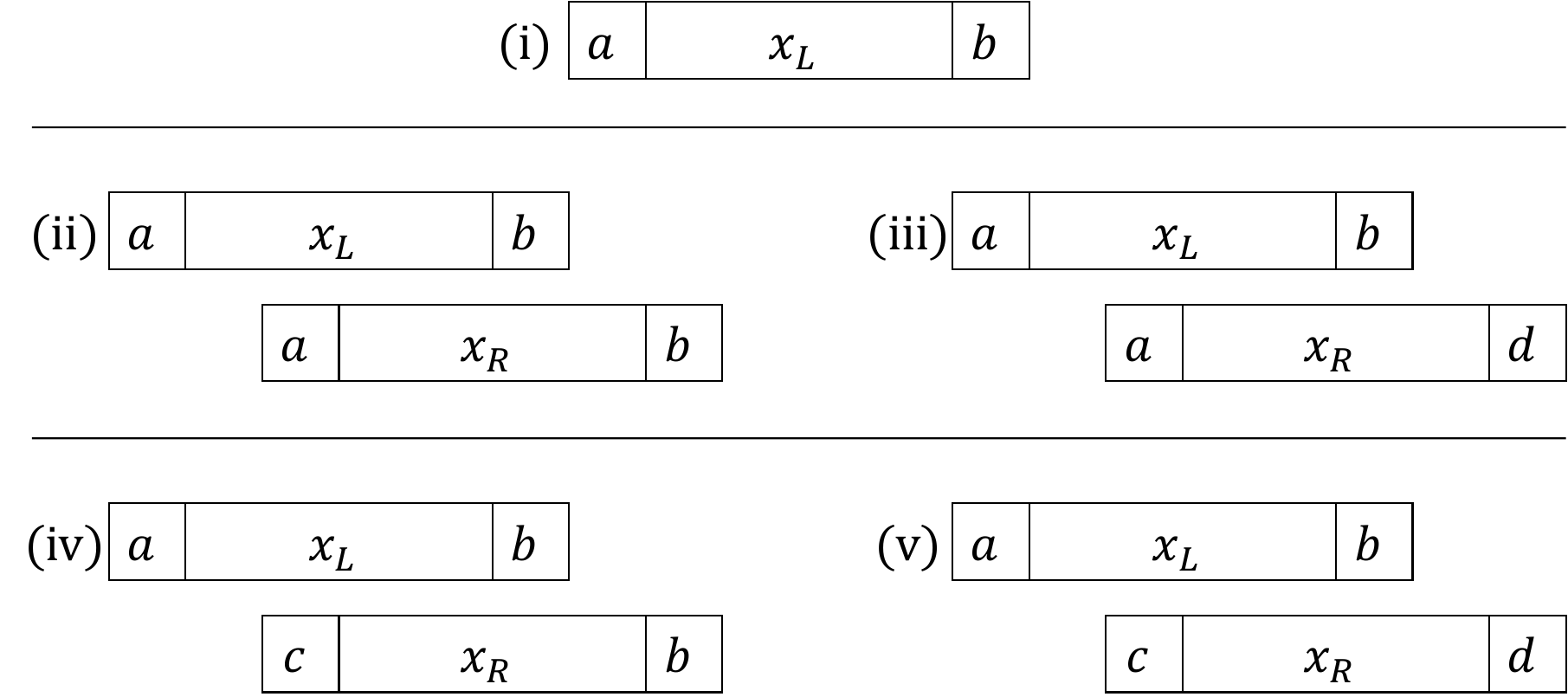}
  \caption{Illustration for the five cases of a string $x$ when $x_L \notin \Prefix(T')$ and $x_R \notin \Suffix(T')$, where $a \ne c$ and $b \ne d$ for characters $a,b,c,d \in \Sigma$.}
  \label{fig:case}
\end{figure}

The reason why we only consider $x_L$ and $x_R$ is due to the periodicity for $x$.
We remark that when $x$ have multiple crossing occurrences which contain or touch the edited position $i$ in $T'$,
then the characters immediately before all crossing occurrences of $x$ in $T'$ except for $x_L$ are the same and the characters immediately after all crossing occurrences of $x$ in $T'$ except for $x_R$ are the same.
Therefore, we do not need to consider all crossing occurrences of $x$ in $T'$ except for $x_L$ and $x_R$ to examine whether $x$ is maximal in $\M(T')$ or not.

\subsection{New/Existing maximal repeats}

In Definitions~\ref{def:new_maximal_repeats} and~\ref{def:existing_maximal_repeats} below, we introduce $\N$ and $\Q$ which are respectively
the sets of new maximal repeats and existing maximal repeats in $T'$,
such that $\N \cup \Q = \M(T') \setminus \{T'\}$.
We then partition each of them into smaller subsets that are suitable for our needs.

\begin{definition} \label{def:new_maximal_repeats}
  Let
  \sinote*{changed}{%
    $\N= (\M(T')\setminus \M(T)) \setminus \{T'\}$
  }%
  denote the set of new maximal repeats in $T'$.
  We divide $\N$ into the three following disjoint subsets:
    \begin{eqnarray*}
    \None & = & \N\cap \RightM(T), \\
    \Ntwo & = & \N\cap \LeftM(T), \\
    \Nthree & = & \N \setminus (\None \cup \Ntwo).
    \end{eqnarray*}
  Further, we divide $\Nthree$ into two disjoint subsets $\Nv$ and $\Nnotv$ as follows:
  $\Nv$ is the set of strings $x \in \Nthree$ such that
  (1) $x$ is of Type $\rm{(v)}$, and
  (2) there is no other right-extension of $x$ in $T'$ than the right-extension(s) of the crossing occurrence(s) of $x$,
  and $\Nnotv = \Nthree \setminus \Nv$.

%
  %
\end{definition}

We note that the following properties hold for $\None,\Ntwo$ and $\Nthree$ by definition:
\begin{itemize}
\item $x\in \None\Rightarrow x\notin \LeftM(T)$ because $x \notin \M(T)$.
\item $x\in \Ntwo\Rightarrow x\notin \RightM(T)$ because $x \notin \M(T)$.
\item $x\in \Nthree\Rightarrow x\notin \RightM(T) \land x\notin \LeftM(T)$.
\end{itemize}

\begin{definition} \label{def:existing_maximal_repeats}
  Let
  \sinote*{changed}{%
    $\Q= (\M(T') \cap \M(T)) \setminus \{T'\}$
  }%
  denote
  the set of existing maximal repeats in $T'$.
  We divide $\Q$ into the two following subsets:
  \begin{eqnarray*}
    \Qn & = & \{x \in \Q \mid \D_{T'}(x) > \D_T(x)\},\\
    \Qnotn & = & \{x \in \Q \mid \D_{T'}(x) \leq \D_T(x)\}.
  \end{eqnarray*}
\end{definition}
Namely, $\Qn$ (resp. $\Qnotn$) is the set of existing nodes of the CDAWG
for which the number of out-edges increase (resp. do not increase). 

\begin{example}
  \label{ex:mr}
  Consider the string $T = \mathrm{cabcabcdabca}\mathbf{d}\mathrm{bcabcdabcabdcabcabcabdabcab}$
  and \\
  the string $T' = \mathrm{cabcabcdabca|bcabcdabcabdcabcabcabdabcab}$ obtained by deleting the character $T[13] = \mathbf{d}$ from $T$ highlighted in bold.
  The edited position in $T'$ is designated by a~$|$.
  For instance, 
  $\mathrm{dabcab}\in\Qn, \mathrm{bcabc}\in\Qnotn, \mathrm{abcabc}\in\None, \mathrm{abcabcab}\in\Nnotv, \mathrm{cabcabcdabcab}\in\Nv$
  (also see Figure~\ref{fig:ex_mr}).
\end{example}

\begin{figure}[H]
  \centering
  \includegraphics[keepaspectratio,scale=0.3]{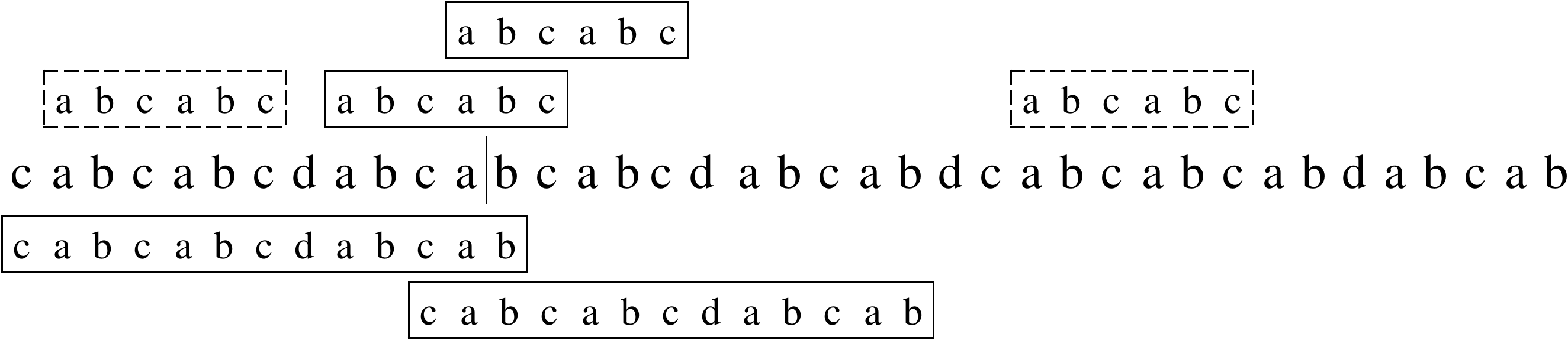}
  \caption{Illustration for $T'= \mathrm{cabcabcdabca|bcabcdabcabdcabcabcabdabcab}$ in Example~\ref{ex:mr} and the occurrences of $\mathrm{abcabc}\in\None$ and $\mathrm{cabcabcdabcab}\in\Nv$ in $T'$.
  The~$|$ symbol in $T'$ exhibits the edit position. The solid line boxes exhibit the crossing occurrences of $\mathrm{abcabc}$ and $\mathrm{cabcabcdabcab}$ in $T'$,
  and the dashed line boxes exhibit the non-crossing occurrences of them in $T'$.}
  \label{fig:ex_mr}
\end{figure}

Recall that $\size = \sum_{x \in \M(T)}\D_T(x)$ denotes the number of edges
in $\CDAWG(T)$ before the edit.
In the subsequent sections,
we work on the three disjoint subsets
$\None \cup \Nnotv$, 
$\Ntwo \cup \Q$, and
$\Nv$ of $\M(T')$,
and show that $\sum_{x \in \None \cup \Nnotv}\D_{T'}(x) \le 3\size+2$ (Section~\ref{sec:upper_bound_1}),
$\sum_{x \in \Ntwo \cup \Q}\D_{T'}(x) \le 3\size+2$ (Section~\ref{sec:upper_bound_2}),
and $\sum_{x \in \Nv}\D_{T'}(x) \le 2\size$ (Section~\ref{sec:upper_bound_3}).
All these immediately lead to Theorem~\ref{theo:CDAWG_sensitivity}
that upper bounds the number of edges in $\CDAWG(T')$ after the edit to $8\size + 4$.


\section{Upper bound for total out-degrees of nodes w.r.t. $\None \cup \Nnotv$}
\label{sec:upper_bound_1}

  In this section, we show an upper bound for the total out-degrees of the nodes corresponding to strings in $\None \cup \Nnotv  \subseteq \M(T')$.
  Recall that $x \in \None \cup \Nnotv$ implies $x \notin \LeftM(T)$.
  
We first describe useful properties of
strings $x \in \None \cup \Nnotv$.
  
  \begin{lemma} \label{lem:exist1}
    Any $x \in \None \cup \Nnotv$ occurs in $T$.
  \end{lemma}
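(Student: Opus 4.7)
The plan is to split on which of the two subsets $x$ belongs to and reduce both cases to the existence of an occurrence of $x$ in $T'$ that lies strictly outside the edit region, since such an occurrence is unaffected by the edit and therefore also appears in $T$ (after an index shift for insertion/deletion).

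If $x \in \None$, the claim is immediate: by definition $\None \subseteq \RightM(T)$, and every right-maximal string of $T$ is in particular a substring of $T$.

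The work lies in the case $x \in \Nnotv \subseteq \Nthree \subseteq \N$. Since $x \in \M(T')$, $x$ occurs at least twice in $T'$, so it falls into one of Types (i)--(v). I would argue that $x$ must have at least one \emph{non-crossing} occurrence in $T'$, by ruling out the alternative type by type. For Type (i), only the single crossing occurrence is crossing, so any further occurrence (which exists because $x$ is a repeat) is non-crossing. For Types (ii), (iii), (iv), the defining clause states that if \emph{all} occurrences of $x$ in $T'$ were crossing, then $x \notin \LeftM(T') \cap \RightM(T') = \M(T')$; but $x \in \N \subseteq \M(T')$, a contradiction, so a non-crossing occurrence must exist. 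Finally, if $x$ is of Type (v), membership in $\Nnotv = \Nthree \setminus \Nv$ is exactly the condition that $x$ possesses a right-extension in $T'$ other than those arising from its crossing occurrences; such a right-extension witnesses an occurrence of $x$ in $T'$ whose immediately following position lies outside the crossing region, hence this occurrence itself is non-crossing.

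In every subcase we obtain a non-crossing occurrence $T'[j..k]$ of $x$. Because $j > i+1$ or $k < i-1$ (with the appropriate analog for deletion), the substring $T'[j..k]$ coincides, up to the obvious index shift, with a substring of $T$, which gives the required occurrence of $x$ in $T$. The only nontrivial point of the argument is the case analysis of the Types, which as outlined above is forced by the requirement $x \in \M(T')$; once this is noted, the rest is a routine verification of the non-crossing correspondence between $T'$ and $T$.
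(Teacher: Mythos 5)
Your proof is correct and follows essentially the same route as the paper: split on $x \in \None$ versus $x \in \Nnotv$, then in the latter case rule out the possibility that all occurrences of $x$ in $T'$ are crossing (for Types (i)--(iv) via the definition of the types and $x \in \M(T')$, and for Type (v) via the non-crossing right-extension guaranteed by $\Nnotv = \Nthree \setminus \Nv$), and conclude that the resulting non-crossing occurrence persists in $T$. Handling Type (i) separately via the ``$x$ occurs at least twice'' argument rather than folding it into the contrapositive as the paper does is a negligible presentational difference.
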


  \begin{proof}
    In the case $x \in \None$, since $x \in \RightM(T)$, 
    $x$ occurs in $T$. 

    Let us consider the case $x \in \Nnotv$ that is of Type $\rm{(i)}$, $\rm{(ii)}$, $\rm{(iii)}$ or $\rm{(iv)}$.
    Since $x$ is not of Type $\rm{(v)}$, 
    if all occurrences of $x$ in $T'$ are crossing occurrences of $x$ in $T'$, then $x \notin \M(T')$.
    Therefore, $x$ occurs in $T$.

    Let us consider the case $x \in \Nnotv$ that is of Type $\rm{(v)}$.
    Due to the definition of $\Nnotv$, there exists a distinct right-extension of $x$ in $T'$ other than the right-extension(s) of the crossing occurrence(s) of $x$.
    Therefore, there is a non-crossing occurrence of $x$ in $T'$
    as shown in Figure~\ref{fig:exsit1},
    implying that $x$ occurs in $T$.
  \end{proof}

  \begin{figure}[hbt]
    \centering
    \includegraphics[keepaspectratio,scale=0.33]{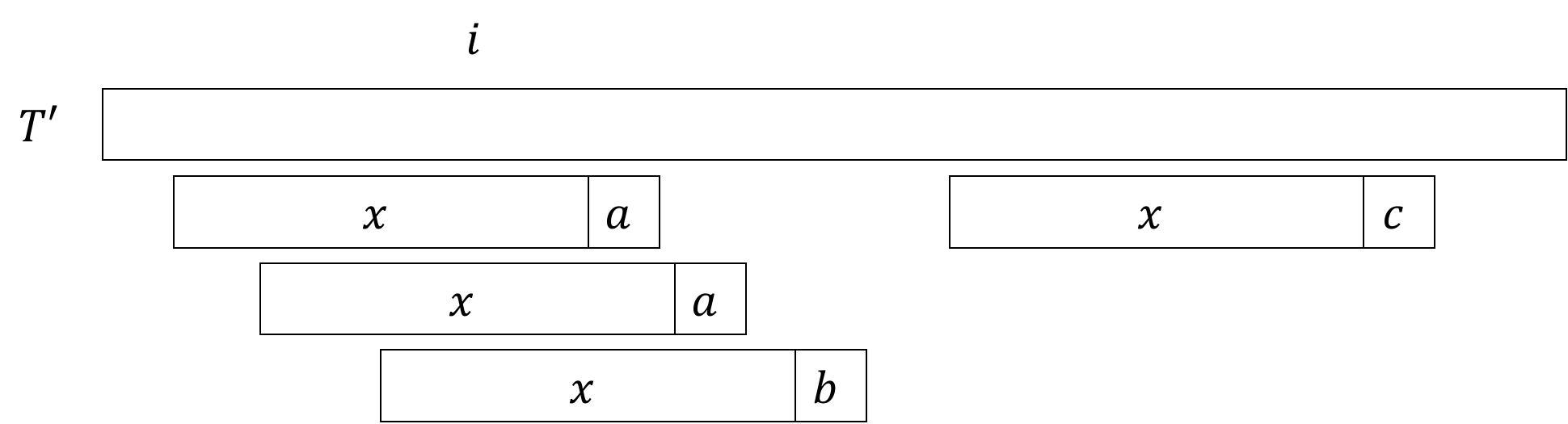}
    \caption{Illustration for Lemma~\ref{lem:exist1} where $i$ is the edited position and $a, b, c$ differ from each other.}
    \label{fig:exsit1}
  \end{figure}


  \begin{lemma} \label{lem:sp123}
    For any $x \in \None \cup \Nnotv$ that is of Type $\rm{(i)}$, $\rm{(ii)}$ or $\rm{(iii)}$,
    there does not exist $y\in \None \cup \Nnotv$ such that
    $|y|>|x|$ and $S_{x_L}=S_{{y}_{G}}$,
    where $G \in \{L,R\}$.
  \end{lemma}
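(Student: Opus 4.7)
The plan is to argue by contradiction: suppose that some $y\in\None\cup\Nnotv$ satisfies $|y|>|x|$ and $S_{x_L}=S_{y_G}$ for some $G\in\{L,R\}$. Since both $S$-parts share a common starting position (namely $i$, or both are empty) and the same length, the occurrences $x_L$ and $y_G$ end at the same position in $T'$; combined with $|y|>|x|$, this forces $y_G$ to begin strictly to the left of $x_L$. In particular, $x_L$ does not start at position~$1$, the string $x$ is a suffix of $y$, and the character $a$ of $T'$ immediately preceding $x_L$ is exactly $y[|y|-|x|]$.

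Next I will pin down the left-neighbor of every occurrence of $x$ in $T'$. By Lemma~\ref{lem:exist1} applied to $y$, $y$ occurs in $T$; in that occurrence $x$ appears as a suffix preceded by $a$, so $x$ has an occurrence in $T$ preceded by $a$. By Lemma~\ref{lem:exist1} applied to $x$, together with $x\notin\LeftM(T)$ (a direct consequence of $x\in\None\cup\Nnotv$), $x$ occurs in $T$ and all its occurrences in $T$ share a single preceding character $c$; hence $a=c$. A short case check over the three edit operations then shows that every non-crossing occurrence of $x$ in $T'$ is preceded by $c$ as well, because the character immediately to the left of such an occurrence is literally inherited from $T$.

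For the crossing occurrences I would invoke the hypothesis that $x$ is of Type (i), (ii), or (iii): in Types (ii) and (iii) all crossing occurrences of $x$ share a single preceding character, and Type (i) has only one crossing occurrence altogether; in every case the common preceding character of the crossing occurrences equals the one preceding $x_L$, which we already identified as $a=c$. Combining, every occurrence of $x$ in $T'$ is preceded by $c$, so $x\in\LeftM(T')$ can only be witnessed by $x\in\Prefix(T')$.

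To close, I rule out the prefix case. Since $x\notin\LeftM(T)$, $x$ is not a prefix of $T$; hence $x\in\Prefix(T')$ is possible only when the position-$1$ occurrence of $x$ in $T'$ touches or contains the edit position $i$, i.e., is a crossing occurrence, in which case it coincides with the leftmost crossing occurrence $x_L$. This would put $x_L$ at position~$1$, contradicting the earlier conclusion that $y_G$ starts strictly to the left of $x_L$. The hardest part will be keeping the crossing-versus-non-crossing bookkeeping consistent across insertion, deletion, and substitution, since the three cases have subtly different definitions of ``crossing'' (and the subcase $|S_{x_L}|=0$ has to be treated on its own); the $G=L$ and $G=R$ branches can be argued identically.
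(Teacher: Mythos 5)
Your proof is correct and rests on the same core steps as the paper's: Lemma~\ref{lem:exist1} applied to $y$ yields $ax \in \Substr(T)$, the Type (i)/(ii)/(iii) hypothesis fixes a single left-neighbor for all crossing occurrences, and the positional argument (that $y_G$ must begin strictly left of $x_L$) rules out $x_L$ being a prefix of $T'$. The only cosmetic difference is that you derive the contradiction as $x \notin \LeftM(T')$ by propagating the unique left-neighbor forward from $T$ to $T'$, whereas the paper derives $x \in \LeftM(T)$ by pulling a second left-extension (a non-crossing prefix occurrence or a $bx$ with $b \ne a$) back from $T'$ to $T$ --- dual phrasings of the same argument.
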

  
  \begin{proof}
    If $x_L$ is a prefix of $T'$, then clearly there is no $y$ satisfying
    $|y|>|x|$ and $S_{x_L}=S_{{y}_{G}}$.
    In what follows, we consider the case that $x_L$ is not a prefix of $T'$.
    
    For a contrary, suppose that for $x \in \None \cup \Nnotv$ that is of Type $\rm{(i)}$, $\rm{(ii)}$ or $\rm{(iii)}$, 
    there exists $y\in \None \cup \Nnotv$ such that
    $|y|>|x|$ and $S_{x_L}=S_{{y}_{G}}$, where $G \in \{L,R\}$.
    See also Figure~\ref{fig:sp123}.
    Let $a$ be the character immediately before $x_L$.
    Since $x$ is of Type $\rm{(i)}$, $\rm{(ii)}$ or $\rm{(iii)}$,
    every crossing occurrence of $x$ in $T'$ is immediately
    preceded by $a$.
    Because $x \in \M(T')$,
    it holds that $x \in \Prefix(T')$,
    or there is a distinct character $b \in \Sigma \setminus \{a\}$
    such that $bx$ occurs in $T'$.
    This implies that there is a non-crossing occurrence of $x$ in $T'$,
    which is as a prefix of $T$ or is immediately preceded by $b$ in $T$.
    By Lemma~\ref{lem:exist1}, $y$ occurs in $T$, and thus $ax$ that is a suffix of $y$ also occurs in $T$.
    Hence $x \in \LeftM(T)$, however, this contradicts that $x \notin \LeftM(T)$.
  \end{proof}


  \begin{figure}[bth]
    \centering
    \includegraphics[keepaspectratio,scale=0.33]{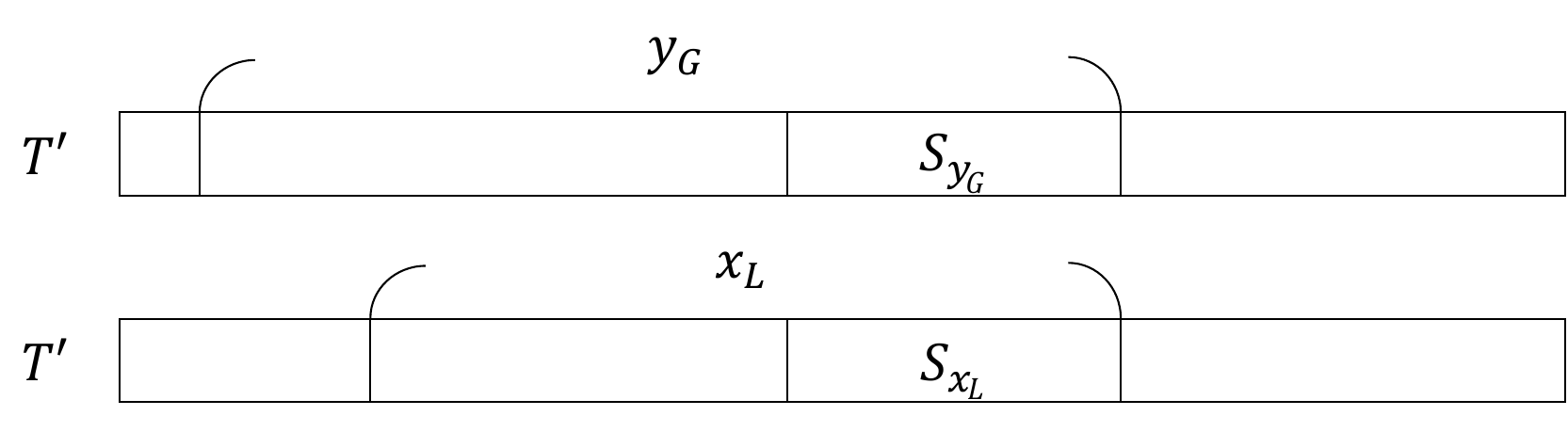}
    \caption{Illustration for Lemma~\ref{lem:sp123}: impossible occurrences of $x$ and $y$ with $S_{x_L} = S_{y_G}$.}
    \label{fig:sp123}
  \end{figure}

  \begin{lemma} \label{lem:sp45}
    For any $x \in \None \cup \Nnotv$ that is of Type $\rm{(iv)}$ or $\rm{(v)}$, 
    there do not exist $y,z \in \None \cup \Nnotv$ with
    $|y|>|x|$ and $|z|>|x|$
    satisfying
    $S_{x_L}=S_{{y}_{G}}$ and $S_{x_R}=S_{{z}_{F}}$ simultaneously,
    where $G,F \in \{L,R\}$.
  \end{lemma}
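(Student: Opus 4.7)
The plan is to adapt the strategy of Lemma~\ref{lem:sp123}, exploiting the feature that distinguishes Types (iv) and (v) from (i)--(iii): the crossing occurrences of $x$ alone already make $x$ left-maximal in $T'$. Together with the earlier periodicity observation that every crossing occurrence of $x$ in $T'$ other than $x_L$ is preceded by the same character, this will force the character $a = T'[j_{x_L}-1]$ immediately preceding $x_L$ to differ from the character $a' = T'[j_{x_R}-1]$ immediately preceding $x_R$, unless $x_L$ is itself a prefix of $T'$.

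First I would dispose of the case $x_L \in \Prefix(T')$: any candidate $y$ would need $y_G$ to end at the same position as $x_L$ but start strictly to the left, which is impossible; hence no such $y$ exists and the conclusion holds vacuously. Otherwise, the Type (iv)/(v) hypothesis combined with the periodicity observation gives $a \ne a'$.

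Now assume for contradiction that both $y$ and $z$ exist. Because $S_{y_G} = S_{x_L}$ and $|y| > |x|$, the occurrence $y_G$ ends at the same position as $x_L$, contains $x_L$ as its length-$|x|$ suffix, and extends at least one position farther to the left, so $a$ sits inside $y_G$ immediately before the suffix-copy of $x$; hence $ax$ is a suffix (and therefore a substring) of $y$. Symmetrically, $a'x$ is a substring of $z$. By Lemma~\ref{lem:exist1}, $y$ and $z$ both occur in $T$, so $ax$ and $a'x$ both occur in $T$; since $a \ne a'$, this yields $x \in \LeftM(T)$, contradicting the fact (noted right after Definition~\ref{def:new_maximal_repeats}) that $x \in \None \cup \Nnotv$ implies $x \notin \LeftM(T)$.

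The only delicate point is establishing $a \ne a'$ from the Type (iv)/(v) hypothesis. This is where the periodicity observation is essential: it pins down the common left-context of every crossing occurrence besides $x_L$, so that left-maximality of $x$ from crossing occurrences alone can arise only from $x_L$ itself having a distinct left-context. Once this is secured, the remainder is a clean adaptation of Lemma~\ref{lem:sp123}, and no further obstacle is expected.
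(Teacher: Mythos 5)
Your proof is correct and takes essentially the same approach as the paper's: the same initial case split on $x_L \in \Prefix(T')$, the same contradiction obtained by showing that $ax$ (a suffix of $y$) and $cx$ (a suffix of $z$) both occur in $T$ via Lemma~\ref{lem:exist1}, and the same conclusion $x \in \LeftM(T)$ contradicting $x \notin \LeftM(T)$. The only difference is one of detail: you explicitly derive $a \ne c$ from the Type~(iv)/(v) definition together with the periodicity remark, whereas the paper asserts this inequality in a parenthetical without further comment.
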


  \begin{proof}
    If $x_L$ is a prefix of $T'$, then clearly there is no $y$ satisfying
    $|y|>|x|$ and $S_{x_L}=S_{{y}_{G}}$.
    In what follows, we consider the case that $x_L$ is not a prefix of $T'$.
    
    For a contrary, 
    suppose that for $x \in \None \cup \Nnotv$ that is of Type $\rm{(iv)}$ or $\rm{(v)}$, 
    there exist $y,z \in \None \cup \Nnotv$ with $|y|>|x|$ and $|z|>|x|$ such that $S_{x_L}=S_{{y}_{G}}$ and $S_{x_R}=S_{{z}_{F}}$ at the same time, where $G,F \in \{L,R\}$.
    Let the character immediately before $x_L$ and the character immediately before $x_R$ be $a$ and $c$~($a \neq c$), respectively.
    \rhnote*{changed "b" to "c"}{%
    By Lemma~\ref{lem:exist1}, $y$ and $z$ occur in $T$, and thus $ax$ that is a suffix of $y$ and $cx$ that is a suffix of $z$ both occur in $T$.
    }%
    Therefore, $x \in \LeftM(T)$, however, this contradicts that $x \notin \LeftM(T)$.
  \end{proof}

  \subsection{Correspondence between $\None \cup \Nnotv$ and $\M(T)$}

  For any $x \in \None \cup \Nnotv$ that is of Type $\rm{(i)}$, $\rm{(ii)}$ or $\rm{(iii)}$, we associate $x$ with $S_{x_L}$.
  For any $x \in \None \cup \Nnotv$ that is of Type $\rm{(iv)}$ or $\rm{(v)}$, 
  if there does not exist $y\in \None \cup \Nnotv$
  such that $|y|>|x|$ and 
  $S_{x_L}=S_{{y}_{G}}$ with $G \in \{L,R\}$, we associate $x$ with $S_{x_L}$, 
  and otherwise we associate $x$ with $S_{x_R}$.

  By Lemma~\ref{lem:sp123} and Lemma~\ref{lem:sp45}, each $x \in \None \cup \Nnotv$ can be associated to a distinct string $S_{x_G}$ with $G \in \{L,R\}$.
  Note however that $S_{x_G}$ may not be maximal in $T$.
  Thus we introduce a function $U$
  that bridges each $x \in \None \cup \Nnotv$ to a distinct maximal substring in $T$.

  \begin{definition} \label{def:U_x}
    For any $x \in \None \cup \Nnotv$,
    let $U(x)=\lrep_T({S_{x_G}})$ (see Figure~\ref{fig:U_x}).
  \end{definition}
    By Lemma~\ref{lem:exist1}, $x$ occurs in $T$ and thus
    its suffix $S_{x_G}$ also occurs in $T$.
    Hence $U(x)=\lrep_T({S_{x_G}})$ is well defined.




  \begin{figure}[H]
    \centering
    \includegraphics[keepaspectratio,scale=0.33]{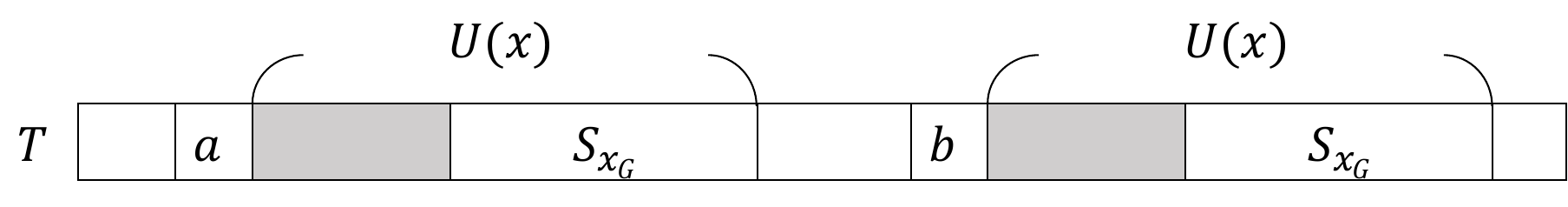}
    \caption{Illustration for $U(x)$ $(a\neq b)$.}
    \label{fig:U_x}
  \end{figure}

  \begin{lemma} \label{lem:U_x}
    For any $x \in \None \cup \Nnotv$, $U(x) \in \M(T)$.
  \end{lemma}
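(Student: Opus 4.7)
The plan is to prove $U(x) \in \M(T) = \LeftM(T) \cap \RightM(T)$ by handling left-maximality and right-maximality separately. Left-maximality is immediate: by Lemma~\ref{lem:exist1}, $x$ occurs in $T$, so its suffix $S_{x_G}$ occurs in $T$, and $U(x) = \lrep_T(S_{x_G})$ lies in $\LeftM(T)$ by construction of $\lrep_T$. For right-maximality, I would write $U(x) = \alpha S_{x_G}$ with $\alpha$ of minimum length and observe that the minimality of $\alpha$ implies every occurrence of $S_{x_G}$ in $T$ is preceded by $\alpha$ (else the longest common preceding context of $S_{x_G}$'s occurrences would itself yield a strictly shorter left-maximal extension, contradicting the choice of $\alpha$). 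This gives a bijection between the occurrences of $U(x)$ and of $S_{x_G}$ in $T$, so that $U(x) \in \RightM(T)$ is equivalent to $S_{x_G} \in \RightM(T)$.

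Next I would split on whether $x \in \None$ or $x \in \Nnotv$. If $x \in \None$, then $x \in \RightM(T)$, so $x$ is either a suffix of $T$ or has two distinct right-extension characters in $T$; either property descends to the suffix $S_{x_G}$ of $x$, giving $S_{x_G} \in \RightM(T)$ at once. If $x \in \Nnotv$, then $x \notin \RightM(T)$, so every occurrence of $x$ in $T$ is followed by a single character $c'$ and $x \notin \Suffix(T)$. I would exploit $x \in \M(T')$: non-crossing occurrences of $x$ in $T'$ inherit right-extension $c'$ from $T$, so right-maximality of $x$ in $T'$ forces either $x \in \Suffix(T')$ or some crossing occurrence $x_G$ with right-extension $d \neq c'$ in $T'$. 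In each subcase I would translate back to $T$ via the explicit description of $S_{x_G}$ at the crossing position $[i, k_G]$ in $T'$, yielding either that $S_{x_G}$ is a suffix of $T$ or that $S_{x_G}$ has an occurrence in $T$ with a right-extension distinct from $c'$.

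The main obstacle will be closing the $\Nnotv$ case for insertion and substitution. The deletion case is clean: positions $m \geq i$ in $T'$ correspond to positions $m+1$ in $T$ with matching characters, so the crossing occurrence of $S_{x_G}$ at $[i, k_G]$ in $T'$ is mirrored by an occurrence at $[i+1, k_G+1]$ in $T$ with the same right-extension $d$, directly supplying the distinct right-extension needed. For insertion and substitution, $S_{x_G}$ contains the edited character and its crossing occurrence in $T'$ has no direct image in $T$; here one must either locate an occurrence of the edited character in $T$ that extends to $S_{x_G}$ with right-extension $d$, or combine the common-context remark about crossings except $x_R$ (and symmetrically except $x_L$) with the non-crossing occurrence of $x$ guaranteed for Types $\rm{(i)}$--$\rm{(iv)}$ and for $\rm{(v)} \setminus \Nv$ to produce the required distinct right-extension of $S_{x_G}$ in $T$.
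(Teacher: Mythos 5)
Your overall plan coincides with the paper's: reduce $U(x)\in\M(T)$ to $S_{x_G}\in\RightM(T)$ (left-maximality being automatic from $\lrep_T$), dispatch $\None$ because right-maximality of $x$ in $T$ passes to its suffix $S_{x_G}$, and for $\Nnotv$ play $x\in\M(T')$ against $x\notin\RightM(T)$ together with the crossing/non-crossing dichotomy. Your preliminary observation that $\lrep_T$ carries right-maximal strings to right-maximal strings (every occurrence of $S_{x_G}$ in $T$ is preceded by the minimal extension $\alpha$, so end-positions of $U(x)$ and of $S_{x_G}$ coincide) is correct and is used implicitly in the paper. Your treatment of $\None$ and of the deletion sub-case of $\Nnotv$ are both sound.

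The difficulty is that you explicitly leave the insertion/substitution sub-case of $\Nnotv$ unfinished, and that is exactly where the content of the lemma lies; the two alternatives you list at the end ("locate an occurrence of the edited character in $T$\dots" or "combine the common-context remark\dots") are directions, not arguments. The paper does not split on edit type at all. It splits according to the Type~$\rm{(i)}$--$\rm{(v)}$ classification of the crossing occurrences, and the missing ingredient your sketch is circling around is the following geometric fact: when $x$ has two or more crossing occurrences, $S_{x_G}$ appears inside $x$ not only as a suffix but also at an interior position, because the crossing occurrences all straddle $i$ and hence overlap, making $S_{x_L}$ simultaneously a prefix and a suffix of $S_{x_R}$. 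Since $x$ itself occurs in $T$ by Lemma~\ref{lem:exist1}, that interior occurrence of $S_{x_G}$ transfers to $T$, and the character following it there is the right-extension seen after the corresponding crossing occurrence in $T'$ -- this is what supplies, in one stroke for insertion, substitution, and deletion alike, the second right-extension needed for $S_{x_G}\in\RightM(T)$. The paper then handles Type~$\rm{(iii)}$ (where the crossings already disagree on their right-extension) and Type~$\rm{(v)}\cap\Nnotv$ (where the definition of $\Nnotv$ guarantees a non-crossing occurrence of $x$ whose right-extension differs from all crossing ones) as separate, shorter cases; these distinctions do not appear in your outline. As written, your proposal is a correct partial argument whose central case remains open.
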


  \begin{proof}
    By Definition~\ref{def:U_x}, $U(x) = \lrep_T({S_{x_G}}) \in \LeftM(T)$.
    Therefore, it suffices for us to prove $U(x) \in \RightM(T)$.
    From now on, we consider the four following cases:

    \noindent \textbf{Case (a) $x \in \None$:}
    In this case, $x \in \RightM(T)$, therefore $S_{x_{G}}$ that is a suffix of $x$ also satisfies $S_{x_{G}} \in \RightM(T)$.
    Hence, $U(x) = \lrep_T({S_{x_G}}) \in \RightM(T)$.

    \noindent \textbf{Case (b) $x \in \Nnotv$ and $x$ is of Type $\rm{(i)}$, $\rm{(ii)}$ or $\rm{(iv)}$:}
    Let the character immediately after all crossing occurrences of $x$ in $T'$ be $a$.
    There exists $xb \: (b \ne a)$ in $T'$ or $x \in \Suffix(T')$ because $x \in \M(T')$.
    Since the character immediately after all crossing occurrences of $x$ in $T'$ is $a$, 
    then there exists $xb \: (b \ne a)$ or $x \in \Suffix(T)$ in $T$.
    Hence, $S_{x_{G}} \in \RightM(T)$ since the character immediately after $S_{x_{G}}$ is $a$.
    Thus, $U(x) = \lrep_T({S_{x_G}}) \in \RightM(T)$.

   \noindent \textbf{Case (c) $x \in \Nnotv$ and $x$ is of Type $\rm{(iii)}$:}
    Since $x$ is of Type $\rm{(iii)}$, we associate $x$ with $S_{x_L}$.
    Because $x$ is of Type $\rm{(iii)}$ and $S_{x_L}$ is a suffix of a $S_{x_ R}$, $S_{x_{G}} \in \RightM(T)$ holds.
    Thus, $U(x) = \lrep_T({S_{x_G}}) \in \RightM(T)$.

   \noindent \textbf{Case (d) $x \in \Nnotv$ and $x$ is of Type $\rm{(v)}$:}
    Let the character immediately after $S_{x_ G}$ be $a$.
    Since there exists a distinct right-extension of $x$ in $T'$ other than the right-extension(s) of the crossing occurrence(s) of $x$, there exists $xb$ $(b \neq a)$ in $T$.
    Therefore, $S_{x_ G} \in \RightM(T)$.
    Thus, $U(x) = \lrep_T({S_{x_G}}) \in \RightM(T)$.

    Consequently, we have $U(x) \in \M(T)$.
  \end{proof}

  The next lemma states the uniqueness of $U(x)$.
  \begin{lemma} \label{lem:U_xU_y}
    For any $x,y\in \None \cup \Nnotv$ with $x \neq y$, $U(x) \neq U(y)$.
  \end{lemma}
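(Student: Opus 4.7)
The plan is to argue by contradiction: suppose $U(x) = U(y) = v$ for some distinct $x, y \in \None \cup \Nnotv$. Since $U(\cdot) = \lrep_T(S_{\cdot})$ only extends its argument on the left, both $S_{x_G}$ and $S_{y_H}$ must be suffixes of $v$. Without loss of generality I will assume $|S_{x_G}| \le |S_{y_H}|$, so that $S_{x_G}$ is a suffix of $S_{y_H}$.

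First I would handle the case $|S_{x_G}| = |S_{y_H}|$: being suffixes of $v$ of the same length, they coincide as strings, but the association just before Definition~\ref{def:U_x} (justified by Lemmas~\ref{lem:sp123} and~\ref{lem:sp45}) assigns distinct $S$-strings to distinct elements of $\None \cup \Nnotv$. Hence $x=y$, a contradiction. For the remaining case $|S_{x_G}| < |S_{y_H}|$, when both strings are non-empty they both start at position $i$ in $T'$ by the definitions of $S_{x_L}$ and $S_{x_R}$, so $S_{x_G}$ is also a prefix of $S_{y_H}$; hence $S_{x_G}$ is a proper border of $S_{y_H}$ and the latter has period $p = |S_{y_H}| - |S_{x_G}|$. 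The degenerate case $S_{x_G} = \varepsilon$ forces $v = \lrep_T(\varepsilon) = \varepsilon$, hence $S_{y_H}$ (a suffix of $\varepsilon$) is also $\varepsilon$, returning us to the equal-length case.

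From the border relation the plan is to exploit the rigid left context imposed by $\lrep_T(S_{x_G}) = v$, which forces every occurrence of $S_{x_G}$ in $T$ (at a position where $v$ fits) to be preceded by the fixed prefix $v[1..|v|-|S_{x_G}|]$. Since $y$ occurs in $T$ by Lemma~\ref{lem:exist1}, the border produces two occurrences of $S_{x_G}$ inside that occurrence of $y$ (one as the suffix and one as the prefix of the embedded $S_{y_H}$), yielding two occurrences of $v$ in $T$ at distance $p$. Chasing these through the types (i)--(v) of $x$ and $y$ and the side $L$ or $R$ selected by the association rule, I would derive either that $x \in \LeftM(T)$ (contradicting $x \in \None \cup \Nnotv$) or that the association should have assigned a different $S$-string to $x$, again a contradiction. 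The main obstacle is this last step: several subcases arise depending on whether $|x| \le |v|$ or $|x| > |v|$ and on the types of $x$ and $y$, and each requires combining the periodicity with the specific definitions of crossing occurrences in a delicate way similar to the arguments of Lemmas~\ref{lem:sp123} and~\ref{lem:sp45}.
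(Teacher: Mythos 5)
Your setup matches the paper's: you correctly dispose of the equal-length case by injectivity of the association (Lemmas~\ref{lem:sp123} and~\ref{lem:sp45}), and you correctly observe that in the remaining case $S_{x_G}$ is both a suffix and a prefix of $S_{y_F}$ (the prefix relation because all non-empty $S_{\cdot}$-strings start at position $i$ of $T'$). But at exactly the point where the paper closes the argument in one line, you veer off into an unworked-out periodicity/type analysis and concede that this step is the ``main obstacle.'' That is a genuine gap, and it is avoidable: you never need borders, periods, or the types (i)--(v) here.

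The observation you are missing is that \emph{left-maximality is inherited by prefixes}. Write $U(x)=A\,S_{x_G}$ and $U(y)=B\,S_{y_F}$; since $U(x)=U(y)$ and $|S_{x_G}|<|S_{y_F}|$, we get $|A|>|B|$. Because $S_{x_G}$ is a prefix of $S_{y_F}$, the string $B\,S_{x_G}$ is a prefix of $B\,S_{y_F}=U(y)$, which lies in $\M(T)\subseteq\LeftM(T)$ by Lemma~\ref{lem:U_x}. A prefix of a left-maximal string is left-maximal (if $w\in\Prefix(T)$ so is any prefix of $w$; if $aw,bw\in\Substr(T)$ with $a\neq b$ then the same two characters witness left-maximality of any prefix of $w$). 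Hence $B\,S_{x_G}\in\LeftM(T)$ with $|B|<|A|$, directly contradicting that $A\,S_{x_G}=\lrep_T(S_{x_G})$ is the \emph{shortest} left-maximal left-extension of $S_{x_G}$. No case split on $|x|$ versus $|v|$, on types, or on periods is needed, and the ``rigid left context'' intuition you gestured at is precisely this minimality of $\lrep_T$ -- you had the right ingredient in hand but did not combine it with the prefix-inheritance of left-maximality.
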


  \begin{proof}
    Suppose that there exist $x,y\in \None \cup \Nnotv$ such that
    $x \neq y$ and $U(x) = U(y)$.
    Let $x$ and $y$ correspond to $S_{x_{G}}$ and $S_{y_{F}}$, respectively,
    where $G,F \in \{L,R\}$.
    Let $U(x)=AS_{x_{G}},U(y)=BS_{y_{F}} \:(A,B\in \Substr(T))$,
    and assume without loss of generality that $|S_{x_{G}}|<|S_{y_{F}}|$.
    Then $|A|>|B|$ because $U(x) = U(y)$.
    \rhnote*{added "$U(y)=BS_{y_{F}} \in \M(T)$ by Lemma~\ref{lem:U_x}"}{%
    Since $U(y)=BS_{y_{F}} \in \M(T)$ by Lemma~\ref{lem:U_x}, and since $BS_{x_{G}}$ is a prefix of $BS_{y_{F}}$ (see Figure~\ref{fig:U_xU_y}), we have $BS_{x_{G}}\in \LeftM(T)$.
    }%
    This contradicts $\lrep_T({S_{x_{G}}})=AS_{x_{G}}$.
  \end{proof}

  \begin{figure}[H]
    \centering
    \includegraphics[keepaspectratio,scale=0.33]{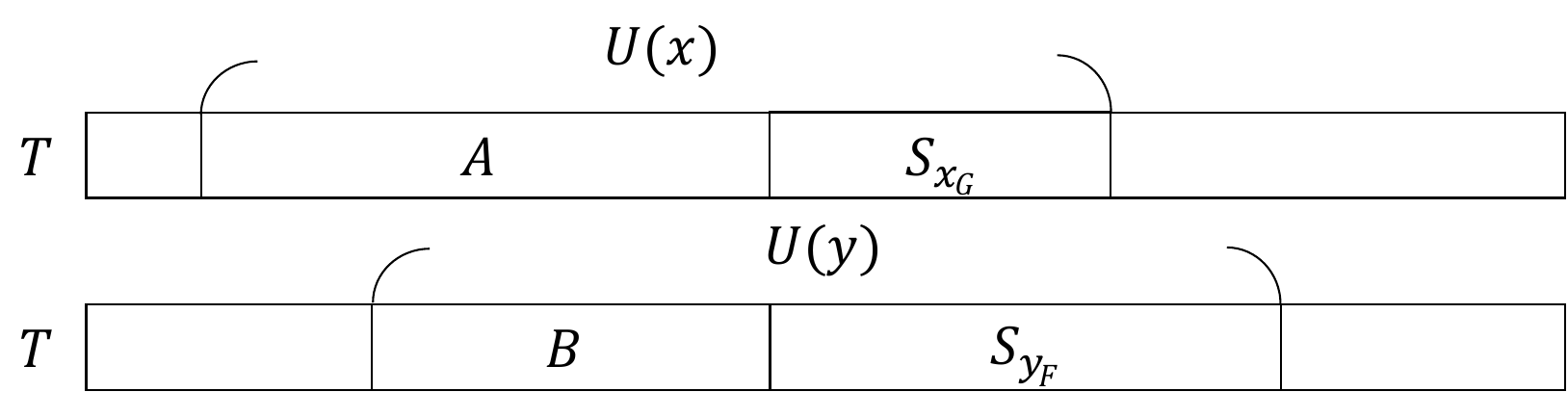}
    \caption{Illustration for the proof of Lemma~\ref{lem:U_xU_y}, where $U(x) = U(y)$.}
    \label{fig:U_xU_y}
  \end{figure}

  \subsection{Upper bound w.r.t. $\None \cup \Nnotv$}

  \begin{lemma} 
    \label{lem:dt1}
    $\sum_{x \in \None \cup \Nnotv}\D_{T'}(x) \le 3\size+2$.
  \end{lemma}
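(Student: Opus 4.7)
My plan is to combine the injection $U \colon \None \cup \Nnotv \to \M(T)$ furnished by Lemmas~\ref{lem:U_x} and~\ref{lem:U_xU_y} with a per-node estimate $\D_{T'}(x) \le \D_T(U(x)) + 2$. Once this is available, summing over $\None \cup \Nnotv$ and exploiting injectivity of $U$ yields
\[
\sum_{x \in \None \cup \Nnotv}\D_{T'}(x) \;\le\; \sum_{x \in \None \cup \Nnotv}\D_T(U(x)) + 2\,|\None \cup \Nnotv| \;\le\; \size + 2\,|\M(T)|,
\]
after which the cardinality bound $|\M(T)| \le \size + 1$ delivers the target $3\size + 2$.

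To prove the per-node estimate I would split the right-extensions of $x$ in $T'$ into those witnessed at some non-crossing occurrence of $x$ and those witnessed only at crossing occurrences of $x$. A non-crossing occurrence of $xa$ in $T'$ lies entirely outside the edited position, so it is also an occurrence of $xa$ in $T$; hence the non-crossing contribution is at most $|\{a \in \Sigma : xa \in \Substr(T)\}|$. Because $S_{x_G}$ is a suffix of $x$ and $U(x) = \lrep_T(S_{x_G})$ is $S_{x_G}$ extended leftward by the shortest string making it left-maximal, every occurrence of $S_{x_G}$ in $T$ is the right end of an occurrence of $U(x)$; consequently $S_{x_G}$ and $U(x)$ share the same right-extensions in $T$, and the non-crossing contribution is at most $\D_T(U(x))$. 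The crossing contribution is at most $2$ by the periodicity remark following the type definition: the character immediately after every crossing occurrence of $x$ is the same, except possibly at $x_R$.

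For the summation, injectivity of $U$ gives $\sum_{x} \D_T(U(x)) \le \sum_{y \in \M(T)} \D_T(y) = \size$ as well as $|\None \cup \Nnotv| \le |\M(T)|$. The bound $|\M(T)| \le \size + 1$ holds because every $y \in \M(T) \setminus \{T\}$ is either $\varepsilon$ or a maximal repeat, hence occurs at least twice in $T$ and contributes at least one out-edge, i.e., $\D_T(y) \ge 1$. Combining the three ingredients produces the claimed $3\size + 2$.

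The main obstacle I anticipate is the non-crossing step: establishing that the right-extensions of $U(x)$ in $T$ coincide with those of $S_{x_G}$ requires unpacking the minimality in the definition of $\lrep_T$, and one must carefully handle degenerate corner cases, such as $S_{x_G} = \varepsilon$ (which forces $U(x) = \varepsilon$, the source of $\CDAWG(T)$) or $x_L \in \Prefix(T')$ (which removes one of the at-most-two crossing extensions).
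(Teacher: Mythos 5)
Your proposal is correct and relies on the same core ingredients as the paper: the injection $U\colon \None\cup\Nnotv\to\M(T)$ (Lemmas~\ref{lem:U_x} and~\ref{lem:U_xU_y}) and the per-node estimate $\D_{T'}(x)\le\D_T(U(x))+2$, where the $\D_T(U(x))$ term bounds the non-crossing right-extensions and the $+2$ bounds the crossing ones via the periodicity remark. The paper's closing step is slightly different from yours: rather than keeping the additive $+2$ per node and invoking $|\M(T)|\le\size+1$, the paper observes that $\D_T(U(x))\ge 1$ whenever $U(x)\ne T$ and absorbs the $+2$ into a multiplicative factor, giving $\D_{T'}(x)\le 3\D_T(U(x))$ for all but the single $x$ (if any) with $U(x)=T$, which contributes a final $+2$ after summing. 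Both aggregations land on $3\size+2$; yours is arguably a touch more uniform since it avoids the case split on $U(x)=T$, at the price of needing the elementary node-count bound $|\M(T)|\le\size+1$, whereas the paper's is a pure per-term scaling and needs no cardinality argument. Your elaboration of why $S_{x_G}$ and $U(x)=\lrep_T(S_{x_G})$ have the same right-extensions in $T$ (equal sets of ending positions, by minimality of $\alpha$ in the definition of $\lrep_T$) is also correct and fills in a step the paper leaves implicit when it writes $\D_T(x)\le\D_T(U(x))$.
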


  \begin{proof}\rhnote*{changed}{%
    Let $U(x)=\lrep_T({S_{x_G}})$, where $G \in \{L,R\}$.
    Since $S_{x_G}$ is a suffix of $x$, $\D_{T}(x) \leq \D_{T}(U(x))$.
    Since there are at most two distinct characters immediately after the crossing occurrences of $x$, 
    $\D_{T'}(x) \leq \D_{T}(U(x))+2$.
    For $U(x) \neq T$, we have $\D_T(U(x)) \ge 1$. Thus $\D_{T'}(x) \le \D_T(U(x))+2 \le 3\D_T(U(x))$.
    For $U(x) = T$, we have $\D_T(U(x))=0$. Thus $\D_{T'}(x) \le 2$.
    By using Lemma~\ref{lem:U_xU_y} and summing up these, we get 
    $\sum_{x \in \None \cup \Nnotv}\D_{T'}(x) \le {\sum_{x \in \None \cup \Nnotv} 3\D_T(U(x))}+2 \le 3\size+2$.
  \end{proof}
  }%

\section{Upper bound for total out-degrees of nodes w.r.t. $\Ntwo \cup \Q$}
\label{sec:upper_bound_2}

In this section, we show an upper bound for the total out-degrees of nodes corresponding to strings that are elements of $\Ntwo \cup \Q \subseteq \M(T')$.

We first present properties of the strings in $\Ntwo \cup \Q$.
In particular, we focus on the strings in $\Ntwo \cup \Qn$,
as the strings in $\Qnotn$ are less important and can be handled in a trivial manner.

  \begin{lemma} \label{lem:exist2}
    Any $x \in \Ntwo \cup \Qn$ occurs in $T$.
  \end{lemma}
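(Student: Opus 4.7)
The plan is to observe that this lemma is essentially immediate from the defining inclusions, so the proof reduces to unpacking the two cases $x \in \Ntwo$ and $x \in \Qn$ and invoking the definitions from Section~3.

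First I would dispatch the case $x \in \Ntwo$. By Definition~\ref{def:new_maximal_repeats}, $\Ntwo = \N \cap \LeftM(T)$, so in particular $x \in \LeftM(T)$. Since every left-maximal substring of $T$ is by definition a substring of $T$ (left-maximality requires $x \in \Substr(T)$), we conclude that $x$ occurs in $T$.

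Next I would handle the case $x \in \Qn$. By Definition~\ref{def:existing_maximal_repeats}, $\Qn \subseteq \Q = (\M(T') \cap \M(T)) \setminus \{T'\}$, so $x \in \M(T)$. Since $\M(T) \subseteq \Substr(T)$, this immediately yields that $x$ occurs in $T$.

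I do not expect any real obstacle here: unlike Lemma~\ref{lem:exist1}, where one had to carefully argue that a non-crossing occurrence must exist even in the Type~$(\rm v)$ subcase, here the two components $\Ntwo$ and $\Qn$ are defined as subsets of sets that already live inside $T$, so the occurrence in $T$ is given for free. The reason the paper still isolates this as a lemma is presumably to mirror the statement of Lemma~\ref{lem:exist1} and to have a clean hypothesis for the subsequent arguments in Section~\ref{sec:upper_bound_2}, where one will want to associate each $x \in \Ntwo \cup \Qn$ with a node of $\CDAWG(T)$ (analogous to the map $U$ built via $\lrep_T$), and that association requires knowing a priori that $x \in \Substr(T)$.
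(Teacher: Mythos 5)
Your proof is correct and follows essentially the same route as the paper: the paper observes in one line that $x \in \Ntwo \cup \Qn$ implies $x \in \LeftM(T)$ (for $\Ntwo$ by definition, for $\Qn$ because $\Qn \subseteq \Q \subseteq \M(T) \subseteq \LeftM(T)$), hence $x$ occurs in $T$. Your case split and the detour through $\M(T) \subseteq \Substr(T)$ for the $\Qn$ branch are just a slightly more explicit unpacking of the same argument.
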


  \begin{proof}
    Since $x \in \Ntwo \cup \Qn$, $x \in \LeftM(T)$. Thus $x \in \Ntwo \cup \Qn$ occurs in $T$.
  \end{proof}

  \begin{lemma} \label{lem:sp124}
    For any $x \in \Ntwo \cup \Qn$ that is of Type $\rm{(i)}$, $\rm{(ii)}$ or $\rm{(iv)}$,
    there does not exist $y\in \Ntwo \cup \Qn$ such that $|y|>|x|$ and $P_{x_{G}}=P_{y_{F}}$, where $G,F \in \{L,R\}$.
  \end{lemma}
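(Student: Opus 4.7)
The plan is to adapt the argument of Lemma~\ref{lem:sp123} to the present setting: replace the suffix $S$-side by the prefix $P$-side, and replace the appeal to $x \notin \LeftM(T)$ either by $x \notin \RightM(T)$ when $x \in \Ntwo$, or by the defining condition $\D_{T'}(x) > \D_T(x)$ when $x \in \Qn$. Concretely, I will treat ``$x \notin \RightM(T')$ restricted to crossing occurrences'' as the analogue of the left-maximality fact used before, derive the unique successor $c$ of $x$'s crossing occurrences, and then get a contradiction from the fact that $y$ occurs in $T$.

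I would first suppose for contradiction that some $y \in \Ntwo \cup \Qn$ with $|y|>|x|$ and $P_{x_G}=P_{y_F}$ exists, for some $G,F \in \{L,R\}$. Since both $P_{x_G}$ and $P_{y_F}$ end at the edited position $i$ and are equal (hence have equal length), the crossing occurrences $x_G$ and $y_F$ begin at the same position in $T'$, and $|y|>|x|$ forces the crossing occurrence of $x$ to be a proper prefix of that of $y$ as an interval. In particular $x_G$ is not at the right end of $T'$, so there is a well-defined character $c$ immediately following $x_G$ in $T'$, and $xc$ is a prefix of $y$. Because $x$ is of Type $\rm(i)$, $\rm(ii)$, or $\rm(iv)$, every crossing occurrence of $x$ in $T'$ is followed by the same character: Type $\rm(i)$ is trivial, and for Types $\rm(ii)$ and $\rm(iv)$ the conditional ``$x \notin \RightM(T')$ if all occurrences of $x$ were crossing'' forces a unique successor over the crossing occurrences. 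Hence all crossing occurrences of $x$ in $T'$ are followed by $c$. By Lemma~\ref{lem:exist2}, $y$ occurs in $T$, and since $xc \in \Prefix(y)$, $xc$ also occurs in $T$.

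Next I would split into two cases. If $x \in \Ntwo$, then $x \notin \RightM(T)$, so $c$ is the only right-extension of $x$ in $T$ and $x$ is not a suffix of $T$; any non-crossing occurrence of $x$ in $T'$ is also an occurrence of $x$ in $T$ and is therefore followed by $c$, and a suffix occurrence of $x$ in $T'$ lying outside the edited region would force $x$ to be a suffix of $T$ too. Hence every occurrence of $x$ in $T'$ is followed by $c$ and $x$ is not a suffix of $T'$, contradicting $x \in \M(T')$. If instead $x \in \Qn$, then $\D_{T'}(x) > \D_T(x)$, so $x$ has some right-extension in $T'$ absent from $T$; this new right-extension must arise at a crossing occurrence of $x$ and, by the type hypothesis, must equal $c$. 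Then $xc$ does not occur in $T$, contradicting what we derived above.

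The main obstacle I anticipate is the careful bookkeeping around the definition of $P_{x_G}$ across the three edit operations: under deletion the position indices shift by one, and when $x_G$ touches the edited position from the left the string $P_{x_G}$ itself extends one character beyond $x_G$. I would have to verify in each sub-case that the alignment step (``$x_G$ is a strict prefix of $y_F$, and the successor of $x_G$ in $T'$ is the character $c$'') remains valid, and that the identification of non-crossing occurrences of $x$ in $T'$ with occurrences of $x$ in $T$ is genuinely unaffected by the edit, so that the contradiction in the $x \in \Ntwo$ case goes through cleanly.
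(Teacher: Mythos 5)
Your proposal is correct and follows essentially the same approach as the paper's proof: align the crossing occurrences via $P_{x_G}=P_{y_F}$, use the Type (i)/(ii)/(iv) hypothesis to get a single character $c$ following all crossing occurrences of $x$, deduce from Lemma~\ref{lem:exist2} that $xc$ occurs in $T$, and contradict $\D_{T'}(x)>\D_T(x)$ when $x\in\Qn$. For $x\in\Ntwo$ the paper simply invokes the argument symmetric to Lemma~\ref{lem:sp123} (which derives $x\in\RightM(T)$ and contradicts $x\notin\RightM(T)$), whereas you phrase it contrapositively (deriving $x\notin\RightM(T')$ and contradicting $x\in\M(T')$); the two are logically equivalent and rest on the same ingredients, so this is a presentational rather than a mathematical difference.
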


  \begin{proof}
    The case that $x \in \Ntwo$ follows from a symmetrical argument to Lemma~\ref{lem:sp123}, in which $y$ may belong to $\Ntwo$ or $\Qn$.
    \rhnote*{delete $\Qn = \M(T) \cap \M(T')$}{%
    Let us consider the case that $x \in \Qn$.
    }%
    Suppose that for $x \in \Qn$ which is of Type $\rm{(i)}$, $\rm{(ii)}$ or $\rm{(iv)}$,
    there is $y\in \Ntwo \cup \Qn$ such that $|y|>|x|$ and $P_{x_{G}}=P_{y_{F}}$, where $G,F \in \{L,R\}$.
    If $x_R$ is a suffix of $T'$, then there is no $y$ such that $|y|>|x|$ and $P_{x_{G}}=P_{y_{F}}$.
    From now on consider the case that $x_R$ is not a suffix of $T'$.
    Let $b$ be the character immediately after $x_{G}$.
    Then, since $x$ is of Type $\rm{(i)}$, $\rm{(ii)}$ or $\rm{(iv)}$,
    character $b$ immediately follows every crossing occurrence of $x$ in $T'$.
    Note that $xb$ is a prefix of $y$.
    Due to Lemma~\ref{lem:exist2}, $y$ occurs in $T$,
    implying $xb$ also occurs in $T$.
    Thus the number of right-extensions of $x$ in $T'$
    is no more than the number of right-extensions of $x$ in $T$.
    However, this contradicts $x \in \Qn$.
  \end{proof}

  \begin{lemma} \label{lem:sp35}
    For any $x \in \Ntwo \cup \Qn$ that is of Type $\rm{(iii)}$ or $\rm{(v)}$, 
    there do not exist $y,z \in \Ntwo \cup \Qn$ with
    $|y|>|x|$ and $|z|>|x|$
    satisfying $P_{x_L}=P_{{y}_{G}}$ and $P_{x_R}=P_{{z}_{F}}$ simultaneously,
    where $G,F \in \{L,R\}$.
  \end{lemma}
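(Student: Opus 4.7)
The plan is to mirror the proof of Lemma~\ref{lem:sp45}, swapping the roles of left and right and replacing the suffix parts $S$ by the prefix parts $P$. Suppose for contradiction that both $y$ and $z$ exist. Boundary cases are handled at the outset exactly as in Lemma~\ref{lem:sp124}: if $x_R$ is a suffix of $T'$, then no admissible $z$ can exist, because $z_F$ would need to start at the same position in $T'$ as $x_R$ and extend strictly further to the right, and the statement is vacuously true.

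For Types $\mathrm{(iii)}$ and $\mathrm{(v)}$, the characters $b$ and $d$ immediately following $x_L$ and $x_R$ in $T'$ are distinct; this is precisely what makes $x$ right-maximal once only the crossing occurrences are considered. The equality $P_{x_L}=P_{y_G}$ forces $y_G$ to start at the same position in $T'$ as $x_L$, so $x$ is a prefix of $y$ and hence $xb$ is a prefix of $y$; symmetrically, $xd$ is a prefix of $z$. By Lemma~\ref{lem:exist2}, both $y$ and $z$ occur in $T$, so both $xb$ and $xd$ occur in $T$.

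For $x\in \Ntwo$, the contradiction is immediate: since $x\in \LeftM(T)\setminus \M(T)$, we have $x\notin \RightM(T)$, yet the two distinct right-extensions $xb$ and $xd$ of $x$ in $T$ make $x$ right-maximal. For $x\in \Qn$, I would repeat the $\Qn$-argument of Lemma~\ref{lem:sp124}: by the periodicity remark, every character following a crossing occurrence of $x$ in $T'$ lies in $\{b,d\}$, and both $b$ and $d$ have just been shown to appear immediately after an occurrence of $x$ in $T$; non-crossing occurrences of $x$ in $T'$ correspond to occurrences of $x$ in $T$ with the same following character. Hence $\D_{T'}(x)\le \D_T(x)$, contradicting $x\in \Qn$.

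The only non-routine point is verifying that $y_G$ and $z_F$ do start at the same positions in $T'$ as $x_L$ and $x_R$ under the equalities $P_{x_L}=P_{y_G}$ and $P_{x_R}=P_{z_F}$, which requires a short case analysis over the three edit operations since the definition of $P_{\cdot}$ shifts by one for deletion; I expect this to be straightforward bookkeeping rather than a real obstacle.
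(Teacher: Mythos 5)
Your proposal follows the paper's proof almost exactly: the same boundary reduction for $x_R \in \Suffix(T')$, the same use of Lemma~\ref{lem:exist2} to pull $xb$ and $xd$ (as prefixes of $y$ and $z$) back into $T$, the same immediate contradiction with $x\notin\RightM(T)$ when $x\in\Ntwo$, and the same $\D_{T'}(x)\le\D_T(x)$ contradiction when $x\in\Qn$. The extra remarks you add about positional bookkeeping and the periodicity observation are elaborations of steps the paper treats as immediate, not a different route.
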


  \begin{proof}
    If $x_R$ is a suffix of $T'$, then clearly there is no $z$ satisfying
    $|z|>|x|$ and $P_{x_R}=S_{{z}_{G}}$.
    In what follows, we consider the case that $x_R$ is not a suffix of $T'$.
    
    Suppose that for $x \in \Ntwo \cup \Qn$ which is of Type $\rm{(iii)}$ or $\rm{(v)}$, 
    there exist $y,z \in \Ntwo \cup \Qn$ with $|y|>|x|$, $|z|>|x|$
    that satisfy $P_{x_L}=P_{{y}_{G}}$ and $P_{x_R}=P_{{z}_{F}}$ at the same time,
    where $G,F \in \{L,R\}$.
    Let $b$ and $d$~($b \neq d$) be the character immediately after $x_L$ in $T'$ and the character immediately after $x_R$ in $T'$, respectively.
    By Lemma~\ref{lem:exist2}, $y$ and $z$ occur in $T$, and hence $xb$ that is a prefix of $y$ and $xd$ that is a prefix of $z$ also occur in $T$.
    Therefore, $x \in \RightM(T)$. However, if $x \in \Ntwo(T)$, this contradicts $x \notin \RightM(T)$.
    Also, if $x \in \Qn$, the number of right-extensions of $x$ in $T'$ do not increase from the number of right-extensions of $x$ in $T$. However, this contradicts $x \in \Qn$.
  \end{proof}

  \subsection{Correspondence between $\Ntwo \cup \Qn$ and $\M(T)$}

  For any $x \in \Ntwo \cup \Qn$ that is of Type $\rm{(i)}$, $\rm{(ii)}$ or $\rm{(iv)}$, then we associate $x$ with both $P_{x_L}$ and $P_{x_R}$.
  For any $x \in \Ntwo \cup \Qn$ that is of Type $\rm{(iii)}$ or $\rm{(v)}$, 
  \begin{itemize}
    \item if there exists $y\in \Ntwo \cup \Qn$ with $|y|>|x|$
  such that $P_{x_R}=P_{{y}_{G}}$ where $G \in \{L,R\}$, then we associate $x$ with $P_{x_L}$ (see Figure~\ref{fig:pxrex});
    \item if there exists $y\in \Ntwo \cup \Qn$ with $|y|>|x|$ such that $P_{x_L}=P_{{y}_{G}}$ where $G \in \{L,R\}$, then we associate $x$ with $P_{x_R}$ (see Figure~\ref{fig:pxlex});
    \item otherwise, we associate $x$ with both $P_{x_L}$ and $P_{x_R}$.
  \end{itemize}

  \begin{figure}[H]
    \centering
    \includegraphics[keepaspectratio,scale=0.33]{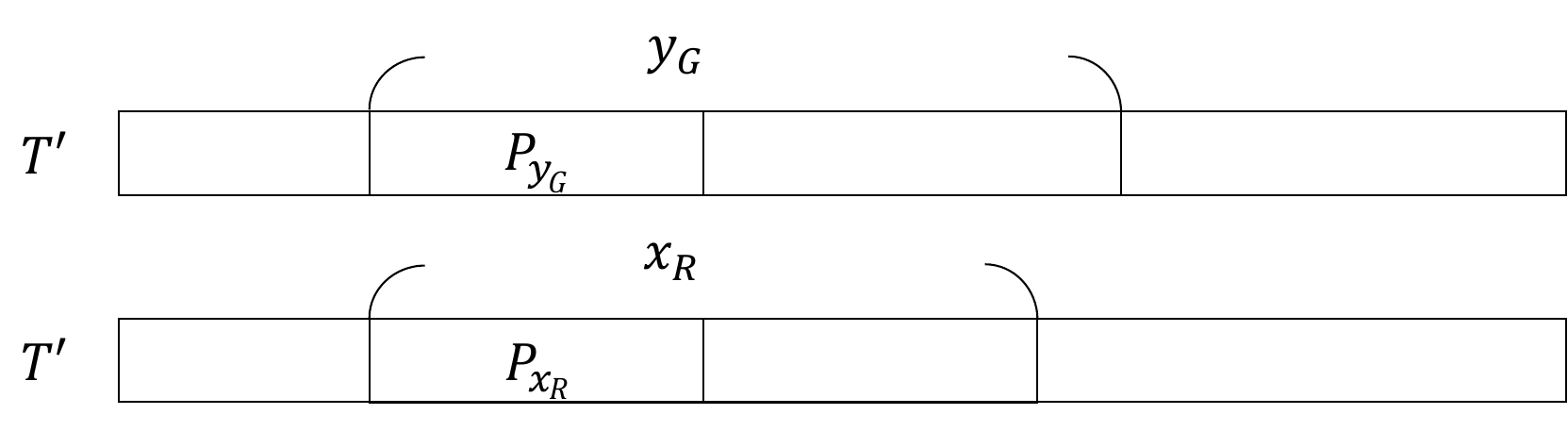}
    \caption{When there exists $y\in \Ntwo \cup \Qn$ with $|y|>|x|$
      such that $P_{x_R}=P_{{y}_{G}}$, where $G \in \{L,R\}$.}
    \label{fig:pxrex}
  \end{figure}

  \begin{figure}[H]
    \centering
    \includegraphics[keepaspectratio,scale=0.33]{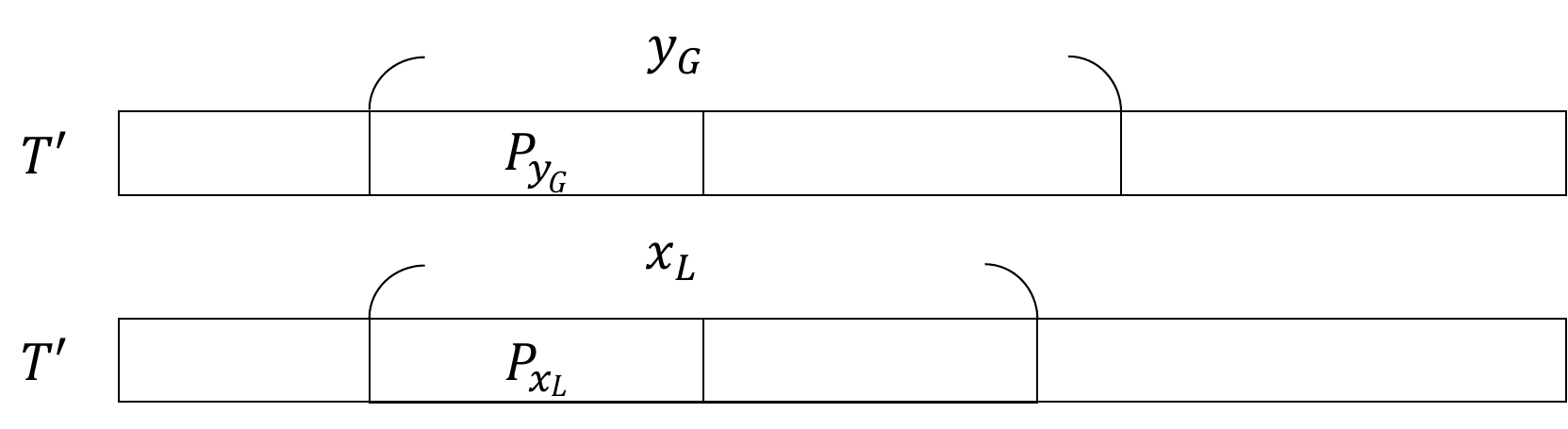}
    \caption{When there exists $y\in \Ntwo \cup \Qn$ with $|y|>|x|$ such that $P_{x_L}=P_{{y}_{G}}$, where $G \in \{L,R\}$.}
    \label{fig:pxlex}
  \end{figure}

  By Lemmas~\ref{lem:sp124} and~\ref{lem:sp35}, each $x \in \Ntwo \cup \Qn$ corresponds to a distinct string $P_{x_G}$, where $G \in \{L,R\}$.
  Below, for each $x \in \Ntwo \cup \Qn$,
  we define $H(x)$ and $I(x)$ to which $x$ corresponds:

  \begin{definition} \label{def:H_x_I_x}
    For each $x \in \Ntwo \cup \Qn$
    associated to ${P_{x_L}}$, let $H(x)=\rrep_T({P_{x_L}})$.
    For each $x \in \Ntwo \cup \Qn$
    associated to ${P_{x_R}}$, let $I(x)=\rrep_T({P_{x_R}})$.
    See Figure~\ref{fig:H_xI_x}.
    \sinote*{added}{%
    When there is only one crossing occurrence of $x$ (i.e. $x_L = x_R$),
    only $H(x)$ is defined as above and $I(x)$ is undefined.
    }%
  \end{definition}
$H(x)$ (resp. $I(x)$) is undefined
for any $x \in \Ntwo \cup \Qn$ that is \emph{not} associated to ${P_{x_L}}$
(resp. ${P_{x_R}}$).

\sinote*{added}{%
By Lemma~\ref{lem:exist2} every $x \in \Ntwo \cup \Qn$ occurs in $T$,
and thus $H(x)$ and $I(x)$ are well defined
when $x$ is associated to $P_{x_L}$ and $P_{x_R}$, respectively.
}%
  
%

  \begin{figure}[H]
    \centering
    \includegraphics[keepaspectratio,scale=0.33]{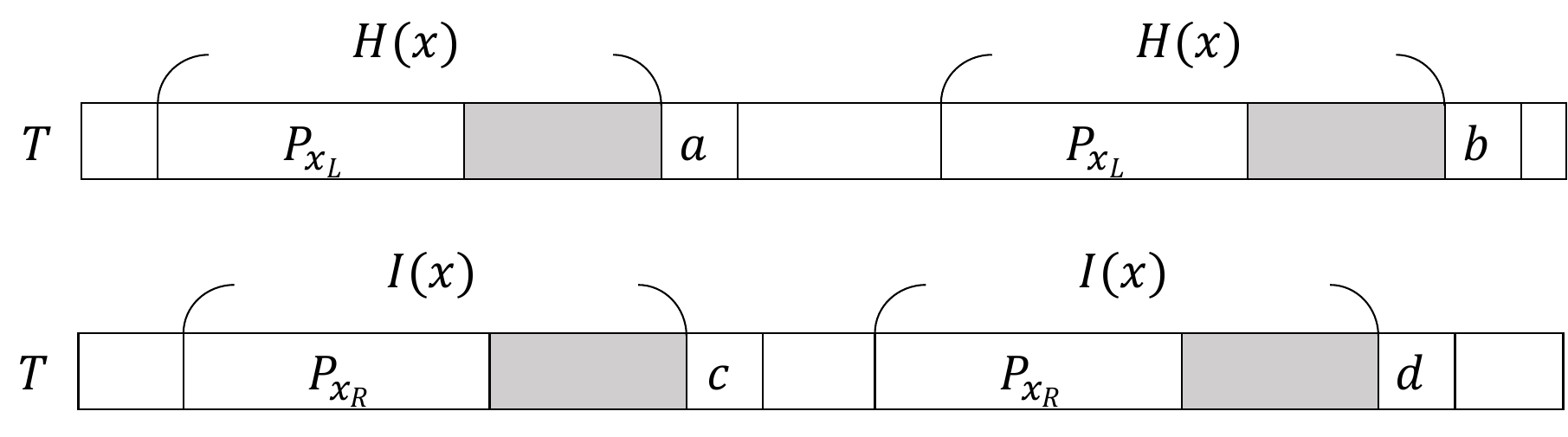}
    \caption{Illustration for $H(x)$ and $I(x)$ ($a\neq b, c\neq d$).}
    \label{fig:H_xI_x}
  \end{figure}

  \begin{lemma} \label{lem:H_x_I_x}
    For any $x \in \Ntwo \cup \Qn$, $H(x) \in \M(T)$ if $H(x)$ is defined,
    and $I(x) \in \M(T)$ if $I(x)$ is defined.
  \end{lemma}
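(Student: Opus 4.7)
My plan is to mirror Lemma~\ref{lem:U_x}, interchanging the roles of left/right and prefix/suffix throughout. Since $H(x) = \rrep_T(P_{x_L})$ and $I(x) = \rrep_T(P_{x_R})$ by Definition~\ref{def:H_x_I_x}, both lie in $\RightM(T)$ immediately from the definition of $\rrep_T$, so only left-maximality in $T$ remains.

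First I would establish that every $x \in \Ntwo \cup \Qn$ satisfies $x \in \LeftM(T)$: either $x \in \Ntwo$ by the definition of $\Ntwo$, or $x \in \Qn \subseteq \Q \subseteq \M(T) \subseteq \LeftM(T)$. By Lemma~\ref{lem:exist2}, $x$ occurs in $T$, and a case analysis paralleling Cases~(a)--(d) in the proof of Lemma~\ref{lem:U_x} (split on whether $x \in \Qn$ or $x \in \Ntwo$ and on the type of $x$, together with how $x_L$ or $x_R$ relates to $i$ and the edit type) lets me conclude that $P_{x_L}$ (and similarly $P_{x_R}$) occurs in $T$ as a prefix of some occurrence of $x$ there. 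Since every prefix of a left-maximal substring inherits left-maximality---two distinct preceding characters of $x$ precede any prefix of $x$ as well, and if $x \in \Prefix(T)$ then any prefix of $x$ is in $\Prefix(T)$---it follows that $P_{x_L}, P_{x_R} \in \LeftM(T)$.

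Next I would invoke the general fact that $\rrep_T$ preserves left-maximality. Writing $H(x) = P_{x_L}\beta$ with $\beta$ the shortest string making $P_{x_L}\beta$ right-maximal in $T$, every proper prefix $\beta'$ of $\beta$ yields a $P_{x_L}\beta'$ that is not a suffix of $T$ and has exactly one right-extension character $c$; hence every occurrence of $P_{x_L}\beta'$ in $T$ extends by $c$, and the sets of starting positions of $P_{x_L}\beta'$ and $P_{x_L}\beta' c$ in $T$ coincide. Iterating from $P_{x_L}$ up to $H(x)$, these starting positions, and hence the left-contexts, are preserved, so $H(x) \in \LeftM(T)$ and therefore $H(x) \in \M(T)$. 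The symmetric argument using $P_{x_R}$ in place of $P_{x_L}$ gives $I(x) \in \M(T)$.

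The main obstacle will be the case analysis in the first step, in particular the touches-from-left sub-case under insertion or substitution, where $P_{x_L}$ equals $x_L$ extended by the edited character $T'[i]$ and so is not literally a prefix of $x$. In that sub-case I would need to use $x \in \LeftM(T)$ together with Lemma~\ref{lem:exist2} to produce a non-crossing occurrence of $x$ in $T$ whose following character in $T$ is $T'[i]$, so that $P_{x_L}$ still embeds into $T$ and inherits the left-context of $x$; the corresponding sub-cases for $P_{x_R}$ are handled symmetrically, closing the proof.
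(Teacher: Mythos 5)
Your high-level strategy coincides with the paper's: by Definition~\ref{def:H_x_I_x}, $H(x),I(x)\in\RightM(T)$ automatically, and left-maximality is transferred from $x\in\LeftM(T)$ to $P_{x_G}$ and then to $\rrep_T(P_{x_G})$. The paper's actual proof is exactly this, in three lines: it asserts that $P_{x_G}$ is a prefix of $x$, hence $P_{x_G}\in\LeftM(T)$, hence $H(x),I(x)\in\LeftM(T)$. Your expansion of the two auxiliary facts (a prefix of a left-maximal string is left-maximal; $\rrep_T$ preserves left-maximality because the unique right-extension forces the start-position sets to agree) is correct and matches what the paper implicitly uses.

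Where the two diverge is the one place you flag as ``the main obstacle.'' You are right that, read literally, the paper's definition makes $P_{x_L}=T'[j'..i]=x\cdot T'[i]$ when $x_L$ touches $i$ from the left under insertion/substitution, so $P_{x_L}$ is \emph{not} a prefix of $x$; the paper's own proof does not acknowledge this and simply says ``$P_{x_G}$ is a prefix of $x$.'' However, the fix you propose does not close the gap: you want to ``produce a non-crossing occurrence of $x$ in $T$ whose following character in $T$ is $T'[i]$,'' but nothing forces the edited character $T'[i]$ to follow $x$ anywhere in $T$. For instance, with $T=\mathtt{zabwyabw}$, insert $\mathtt{c}$ at position $4$ to get $T'=\mathtt{zabcwyabw}$ and take $x=\mathtt{ab}\in\Ntwo$: the only crossing occurrence of $x$ touches $i=4$ from the left, so $P_{x_L}=\mathtt{abc}$, and $\mathtt{c}$ never occurs in $T$ at all, so $\mathtt{abc}\notin\Substr(T)$. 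Thus the claim you need (that $P_{x_L}$ embeds into $T$) is false as stated. The sub-case is only salvageable if, in the touches-from-left branch under insertion/substitution, $P_{x_L}$ is taken to be $T'[j'..i-1]=x$ rather than $T'[j'..i]$ (which is exactly how the deletion case is written and makes $P_{x_L}$ literally a prefix of $x$); with that reading, both the paper's one-line argument and your expanded version go through, and the ``main obstacle'' disappears. As written, though, your proposed repair for that sub-case does not work.
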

  

  \begin{proof}
    By Definition~\ref{def:H_x_I_x}, $H(x), I(x)\in \RightM(T)$.
    Therefore, it suffices for us to prove $H(x), I(x)\in \LeftM(T)$.
    For any $x \in \Ntwo \cup \Qn$, $x \in \LeftM(T)$.
    Since $P_{x_G} \: (G \in \{L,R\})$ is a prefix of $x$,
    we have $P_{x_G} \in \LeftM(T)$.
    Hence $H(x), I(x)\in \LeftM(T)$ holds.
  \end{proof}

  \begin{lemma} \label{lem:H_xH_y}
    For any $x,y \in \None \cup \Nnotv$ with $x \neq y$,
    let $\mathcal{L}$ be a list of $H(x)$, $I(x)$, $H(y)$, $I(y)$
    which are defined.
    Then the elements in $\mathcal{L}$ differ from each other.
  \end{lemma}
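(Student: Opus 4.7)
The plan is to mirror the proof of Lemma~\ref{lem:U_xU_y} dually, using the fact that every suffix of a right-maximal substring of $T$ is itself right-maximal (the dual of ``every prefix of a left-maximal substring of $T$ is left-maximal''). Each element of $\mathcal{L}$ has the form $\rrep_T(P)$ for some $P$ among $P_{x_L}, P_{x_R}, P_{y_L}, P_{y_R}$ with which $x$ or $y$ was associated, and, since all four candidate prefixes end at the edit position $i$ in $T'$, the shorter of any two of them is a suffix of the longer as strings.

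Assume, for contradiction, that two elements $\rrep_T(P_1)$ and $\rrep_T(P_2)$ of $\mathcal{L}$ coincide at some string $W$, and assume without loss of generality that $|P_1| \leq |P_2|$. In the main case $|P_1| < |P_2|$, both $P_1$ and $P_2$ are prefixes of $W$, so $P_1$ is also a prefix of $P_2$, hence a border of $P_2$. Writing $W = P_1 A = P_2 B$ one obtains $|A| > |B|$, and the suffix $P_1 B$ of $W$ inherits right-maximality from $W \in \RightM(T)$. Thus $P_1 B$ is a right-maximal extension of $P_1$ strictly shorter than $\rrep_T(P_1) = P_1 A$, contradicting the minimality of $\rrep_T(P_1)$.

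In the remaining case $|P_1| = |P_2|$ the prefixes coincide as strings (same right endpoint, equal length), and I handle this by case analysis using Lemmas~\ref{lem:sp124} and~\ref{lem:sp35} together with the association rules preceding Definition~\ref{def:H_x_I_x}. If $P_1$ and $P_2$ originate from the same $x$, equality forces $x_L = x_R$ and renders $I(x)$ undefined by the convention added in Definition~\ref{def:H_x_I_x}; if they originate from distinct $x \neq y$ with $|x| = |y|$, then the corresponding crossing occurrences coincide as intervals in $T'$, forcing $x = y$; and if $|x| < |y|$ (without loss of generality), the equality $P_{x_G} = P_{y_F}$ contradicts Lemma~\ref{lem:sp124} when $x$ is of Type (i), (ii), or (iv), and contradicts the association rule (which would have precluded associating $x$ with the clashing prefix) when $x$ is of Type (iii) or (v). The hardest part will be the bookkeeping for this equal-length case, which must sweep over all combinations of $L/R$ sources and all Types; the unequal-length case is the structural heart of the proof and is directly dual to Lemma~\ref{lem:U_xU_y}.
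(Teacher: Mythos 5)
Your proposal is correct and is essentially the paper's proof: the paper dispatches this lemma with the single sentence ``By a symmetrical argument to Lemma~\ref{lem:U_xU_y},'' and your main case ($|P_1|<|P_2|$) carries out exactly that dual argument, replacing ``every prefix of a left-maximal string is left-maximal'' with ``every suffix of a right-maximal string is right-maximal'' to show $P_1 B$ would be a shorter right-maximal extension of $P_1$ than $\rrep_T(P_1)$. The equal-length bookkeeping you add is what the paper instead covers by the remark preceding Definition~\ref{def:H_x_I_x} that Lemmas~\ref{lem:sp124} and~\ref{lem:sp35} already make the associated prefixes $P_{x_G}$ pairwise distinct.
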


  \begin{proof}
    By a symmetrical argument to Lemma~\ref{lem:U_xU_y}.
  \end{proof}

  \subsection{Upper bound w.r.t. $\Ntwo \cup \Q$}

  \begin{lemma} 
    \label{lem:dt2}
    $\sum_{x \in \Ntwo \cup \Q}\D_{T'}(x) \le 3\size+2$.
  \end{lemma}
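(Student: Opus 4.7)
The plan is to follow the same pattern as the proof of Lemma~\ref{lem:dt1}. First, I would split the sum as
\[
\sum_{x \in \Ntwo \cup \Q}\D_{T'}(x) \;=\; \sum_{x \in \Qnotn}\D_{T'}(x) \;+\; \sum_{x \in \Ntwo \cup \Qn}\D_{T'}(x).
\]
The first sum is immediate: by the defining inequality of $\Qnotn$, every $x \in \Qnotn$ satisfies $\D_{T'}(x) \le \D_T(x)$, and since $\Qnotn \subseteq \M(T)\setminus\{T\}$, this part contributes at most $\sum_{x \in \Qnotn}\D_T(x) \le \size$.

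For the second sum, I would charge each $x \in \Ntwo \cup \Qn$ to the maximal substring(s) $H(x)$ and/or $I(x)$ of $T$ given by Definition~\ref{def:H_x_I_x}. The central estimate to establish is
\[
\D_{T'}(x) \;\le\; \D_T(H(x)) + \D_T(I(x))
\]
when both $H(x)$ and $I(x)$ are defined, together with a one-sided analogue $\D_{T'}(x) \le \D_T(H(x)) + 2$ when only $H(x)$ is defined (and symmetrically for $I(x)$). The starting point is the observation $\D_{T'}(x) \le \D_T(x) + 2$, where the ``$+2$'' accounts for the at-most-two new right-extensions contributed by the crossing occurrences $x_L$ and $x_R$ (using the periodicity remark right after the type classification, which forces all crossing occurrences except $x_L$ (resp.\ $x_R$) to share the same following character). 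What remains is to relate $\D_T(x)$ to $\D_T(H(x))$ and $\D_T(I(x))$, exploiting that $P_{x_L}$ and $P_{x_R}$ are prefixes of $x$ whose forced right-extensions in $T$ are exactly $H(x)$ and $I(x)$.

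To conclude, I would invoke Lemma~\ref{lem:H_xH_y}: across all $x \in \Ntwo \cup \Qn$, the defined values $H(x)$ and $I(x)$ are pairwise distinct, so each $y \in \M(T)$ appears as $H(x)$ for at most one $x$ and as $I(x')$ for at most one (other) $x'$. Combined with the $\Qnotn$-contribution of $\D_T(y)$ per $y \in \Qnotn$, each $y$ receives total charge at most $3\D_T(y)$. Summing over $y \in \M(T)$ yields $\sum_{x \in \Ntwo \cup \Q}\D_{T'}(x) \le 3\size + 2$, where the additive ``$+2$'' handles the degenerate case in which $H(x) = T$ or $I(x) = T$ (then $\D_T(T)=0$ and the at-most-two crossing extensions are charged directly), mirroring the $U(x) = T$ case of Lemma~\ref{lem:dt1}.

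The hard part is establishing the key estimate $\D_{T'}(x) \le \D_T(H(x)) + \D_T(I(x))$ and its one-sided analogue. Unlike Lemma~\ref{lem:dt1}, where ``$S_{x_G}$ being a suffix of $x$'' directly gave $\D_T(x) \le \D_T(U(x))$, here $P_{x_L}$ and $P_{x_R}$ are only \emph{prefixes} of $x$, which does not afford a symmetric one-shot comparison. The proof will therefore need a type-by-type analysis over Types~(i)--(v), using that $x \in \Qn \subseteq \M(T)$ forces $H(x)$ (and $I(x)$) to be a prefix-ancestor of $x$ in $T$'s suffix tree (otherwise $x$ would have a unique right-extension in $T$, contradicting $x \in \M(T)$), and then showing that the right-extensions of $x$ realized in $T'$ split cleanly between the out-edges at $H(x)$ and at $I(x)$ in $T$, so that the per-node charges remain within the $3\D_T(y)$ budget.
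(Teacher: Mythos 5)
Your decomposition $\sum_{\Qnotn} + \sum_{\Ntwo \cup \Qn}$ and your use of $H(x)$, $I(x)$, and Lemma~\ref{lem:H_xH_y} are the same building blocks as the paper's proof, but your central estimate $\D_{T'}(x) \le \D_T(H(x)) + \D_T(I(x))$ is false for $x \in \Qn$ and cannot be rescued by a type-by-type analysis. The problem is exactly the one you flag yourself: $P_{x_L}$ and $P_{x_R}$ are prefixes, so $H(x) = \rrep_T(P_{x_L})$ and $I(x) = \rrep_T(P_{x_R})$ can both be \emph{proper prefixes} of $x$ in $\M(T)$, and their out-degrees in $T$ bear no relation to $\D_T(x)$ --- for instance $\D_T(H(x)) = \D_T(I(x)) = 1$ is consistent with $\D_T(x)$ being arbitrarily large, so $\D_{T'}(x) \ge \D_T(x)$ cannot be absorbed by $\D_T(H(x)) + \D_T(I(x))$. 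Your suggestion that the right-extensions of $x$ ``split cleanly between the out-edges at $H(x)$ and at $I(x)$'' is not true: right-extensions live at $x$'s own node, not at those of its prefix-ancestors.

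The fix is to change what quantity you bound for $x \in \Qn$: bound the \emph{increment} $\D_{T'}(x) - \D_T(x)$, not $\D_{T'}(x)$ itself, and let $\D_T(x)$ be charged to $x$'s own node in $\M(T)$. This is precisely what the paper does. Concretely, $\D_{T'}(x) - \D_T(x) \le 2$ (the at-most-two new crossing right-extensions you already identified), and since $\D_T(H(x)), \D_T(I(x)) \ge 1$ when defined and $\ne T$, you get $\D_{T'}(x) - \D_T(x) \le \D_T(H(x)) + \D_T(I(x))$ with coefficient 1. For $x \in \Ntwo$ (where $x \notin \M(T)$, so $\D_T(x)$ is not available as a budget line) you instead bound $\D_{T'}(x) \le 3$ using $x \notin \RightM(T)$, and absorb the $3$ via coefficient 2: $\D_{T'}(x) \le 2\D_T(H(x)) + 2\D_T(I(x))$. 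Then the combined bookkeeping is
\[
\sum_{\Qnotn}\D_{T'} + \sum_{\Qn}\D_T \le \size,
\qquad
\sum_{\Ntwo}\D_{T'} + \sum_{\Qn}(\D_{T'}-\D_T) \le 2\size + 2,
\]
where the second line uses the pairwise-distinctness from Lemma~\ref{lem:H_xH_y} and the degenerate $H(x)=T$ or $I(x)=T$ case contributes the additive $+2$. Adding gives $3\size + 2$. Without switching to the increment for $\Qn$, your budget analysis does not close.
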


  \begin{proof}
    Below, we consider all the four possible cases depending on whether $x \in \Ntwo$ or $x \in \Qn$, and whether $H(x), I(x) \neq T$. 

    \noindent {\large \textbf{When $x\in \Ntwo$ and $H(x), I(x) \neq T$:}}
    \begin{itemize}
    \item
    First, we consider the case that $x$ is associated with both $H(x)$ and $I(x)$.
    Since $x \in \Ntwo$, then
      $x \notin \RightM(T)$.
    Therefore, the number of characters that are immediately after $x$ in $T$ is at most one.
    Moreover, there are at most two distinct characters immediately after the crossing occurrences of $x$.
    Hence, there are at most three distinct characters immediately after $x$ in $T'$, namely we have 
    \begin{equation}\label{equ:equ1}
      \D_{T'}(x) \le 3.
    \end{equation}
    In addition, since ${H(x),I(x) \ne T}$, it holds that $\D_{T}(H(x)),\D_{T}(I(x)) \ge 1$.
    By Inequality~\ref{equ:equ1}, we get $\D_{T'}(x) \le 3 \le \D_{T}(H(x))+\D_{T}(I(x))+1 \le 2\D_{T}(H(x))+2\D_{T}(I(x)).$

    \item Second, we consider the case that $x$ is associated with only one of $H(x)$ or $I(x)$.
    \begin{itemize}
    \item
    Assume that we associate $x$ with $H(x)$.
    Since $x \in \Ntwo$, then
      $x \notin \RightM(T)$.
    Therefore, the number of characters immediately after $x$ in $T$ is at most one.
    \rhnote*{added the case $x$ has only one crossing occurrence}{%
    In this case, we do not associate $x$ with $I(x)$, hence, $x$ has only one crossing occurrence or there exists $y\in \Ntwo \cup \Qn$ such that $|y|>|x|$ and $P_{x_R}=P_{{y}_{G}}$ where $G \in \{L,R\}$.
    When $x$ has only one crossing occurrence, there are at most one character immediately after the crossing occurrence of $x$.
    When there exists  $y\in \Ntwo \cup \Qn$ such that $|y|>|x|$ and $P_{x_R}=P_{{y}_{G}}$ where $G \in \{L,R\}$, then such $y$ occurs in $T$ due to Lemma~\ref{lem:exist2}.
    Therefore, although there are at most two distinct characters immediately after the crossing occurrences of $x$, one of them is the character immediately after $x$ in $T$ as shown in Figure~\ref{fig:xaex}.
    }%
    Hence, we have 
    \begin{equation}\label{equ:equ2}
      \D_{T'}(x) \le 2.
    \end{equation}
    In addition, since ${H(x) \ne T}$, then $\D_{T}(H(x)) \ge 1$ holds.
    By Inequality~\ref{equ:equ2}, we get $\D_{T'}(x) \le 2 \le \D_{T}(H(x))+1 \le 2\D_{T}(H(x)).$
    
    \item
    Let us assume that we associate $x$ with $I(x)$. In the same way as we associate $x$ with $H(x)$, we get $\D_{T'}(x) \le 2 \le \D_{T}(I(x))+1 \le 2\D_{T}(I(x))$.
    \end{itemize}
    \end{itemize}

    \begin{figure}[H]
      \centering
      \includegraphics[keepaspectratio,scale=0.33]{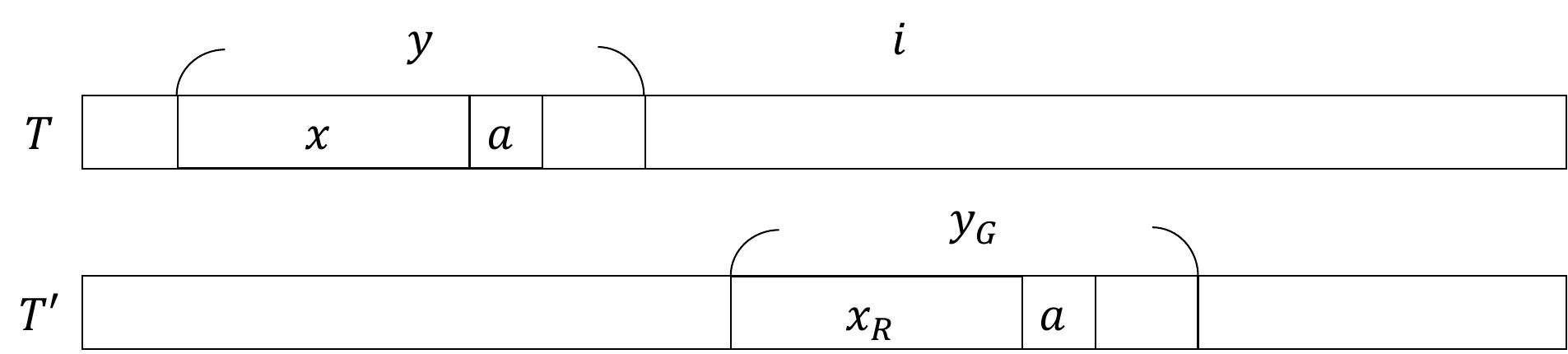}
      \caption{$xa$ occurs in $T$, where $a$ is the character immediately after the crossing occurrence $x_R$.}
      \label{fig:xaex}
    \end{figure}

    \noindent {\large \textbf{When $x\in \Ntwo$ and $H(x) = T$ or $I(x) = T$:}}
    \begin{itemize}
    \item
    Let $H(x) = T$. Now that $H(x)$ is defined, $x$ is associated with $H(x)$.
    \begin{itemize}
     \item First, we consider the case that we associate $x$ with both $H(x)$ and $I(x)$.
    In the same way as in Inequality~\ref{equ:equ1}, we get $\D_{T'}(x) \le 3$.
    In addition, since $H(x)=T$ and Lemma~\ref{lem:H_xH_y} holds, $I(x) \neq T$ and thus $\D_{T}(I(x)) \ge 1$ holds.
    Hence, we have $\D_{T'}(x) \le 3 \le \D_{T}(I(x))+2 \le 2\D_{T}(I(x))+2 \le 2\D_{T}(H(x))+2\D_{T}(I(x))+2$.
    \item Second, we consider the case that we only associate $x$ with $H(x)$.
    Since $H(x)=T$, $\D_{T}(H(x)) =0$ holds.
    In the same way as in Inequality~\ref{equ:equ2}, we get $\D_{T'}(x) \le 2$.
    Thus, we have $\D_{T'}(x) \le 2 \le 2\D_{T}(H(x))+2$.
    \end{itemize}

    \item
    Let $I(x) = T$. Now that $I(x)$ is defined, $x$ is associated with $I(x)$.
    In the same way as in the case for $H(x)=T$, 
    we get $\D_{T'}(x) \le 3 \le \D_{T}(H(x))+2 \le 2\D_{T}(H(x))+2 \le 2\D_{T}(H(x))+2\D_{T}(I(x))+2$ in the case that we associate $x$ with both $H(x)$ and $I(x)$,
    and we get $\D_{T'}(x) \le 2 \le 2\D_{T}(I(x))+2$ in the case that we only associate $x$ with $I(x)$.
   \end{itemize}

   \noindent {\large \textbf{When $x\in \Qn$ and $H(x), I(x) \neq T$:}}    
    Here, we analyze $\D_{T'}(x)-\D_T{(x)}$ since $x \in \Qn$.
    \begin{itemize}
    \item
    First, we consider the case that we associate $x$ with both $H(x)$ and $I(x)$.
    There are at most two distinct characters immediately after the crossing occurrences of $x$.
    Hence,
    \begin{equation}\label{equ:equ3}
      \D_{T'}(x)-\D_T{(x)} \le 2.
    \end{equation}
    In addition, since ${H(x),I(x) \ne T}$, then $\D_{T}(H(x)),\D_{T}(I(x)) \ge 1$ holds.
    By Inequality~\ref{equ:equ3}, we get $\D_{T'}(x)-\D_T{(x)} \le 2 \le \D_{T}(H(x))+\D_{T}(I(x)) \le \D_{T}(H(x))+\D_{T}(I(x))$.

    \item
    Second, we consider the case that we associate $x$ with only one of $H(x)$ or $I(x)$.
    Here, let us assume that we associate $x$ with $H(x)$.
    \rhnote*{added the case $x$ has only one crossing occurrence}{%
    In this case, we do not associate $x$ with $I(x)$, hence, $x$ has only one crossing occurrence or there exists $y\in \Ntwo \cup \Qn$ such that $|y|>|x|$ and $P_{x_R}=P_{{y}_{G}}$ where $G \in \{L,R\}$.
    When $x$ has only one crossing occurrence, there are at most one character immediately after the crossing occurrence of $x$.
    When there exists  $y\in \Ntwo \cup \Qn$ such that $|y|>|x|$ and $P_{x_R}=P_{{y}_{G}}$ where $G \in \{L,R\}$, then such $y$ occurs in $T$ due to Lemma~\ref{lem:exist2}.
    Therefore, although there are at most two distinct characters immediately after the crossing occurrences of $x$, one of them is the character immediately after $x$ in $T$ as shown in Figure~\ref{fig:xaex}.
    }%
    Hence, we have
    \begin{equation}\label{equ:equ4}
      \D_{T'}(x)-\D_T{(x)} \le 1.
    \end{equation}
    In addition, since ${H(x) \ne T}$, then $\D_{T}(H(x)) \ge 1$ holds.
    By Inequality~\ref{equ:equ4}, we get $\D_{T'}(x)-\D_T{(x)} \le 1 \le \D_{T}(H(x))$.
    In the case that we associate $x$ with $I(x)$, in the same way as we associate $x$ with $H(x)$, we get $\D_{T'}(x)-\D_T{(x)} \le 1 \le \D_{T}(I(x))$.
    \end{itemize}

   \noindent {\large \textbf{When $x\in \Qn$ and $H(x) = T$ or $I(x) = T$:}}
   Here, we analyze $\D_{T'}(x)-\D_T{(x)}$ since $x \in \Qn$.
   \begin{itemize}
    \item
    Let $H(x) = T$. Since $H(x)$ is defined, $x$ is associated with $H(x)$.
    \begin{itemize}  
     \item First, let us consider the case that we associate $x$ with both $H(x)$ and $I(x)$.
    In the same way as in Inequality~\ref{equ:equ3}, we get $\D_{T'}(x)-\D_T{(x)} \le 2$.
    In addition, since $H(x)=T$ and Lemma~\ref{lem:H_xH_y} holds, $I(x) \neq T$ and thus $\D_{T}(I(x)) \ge 1$ holds.
    Hence $\D_{T'}(x)-\D_T{(x)} \le 2 \le \D_{T}(I(x)) + 1 \le \D_{T}(H(x)) + \D_{T}(I(x)) + 1$.

     \item Second, let us consider the case that we only associate $x$ with $H(x)$.
    Since $H(x)=T$, $\D_{T}(H(x))=0$ holds.
    In the same way as in Inequality~\ref{equ:equ4}, we get $\D_{T'}(x)-\D_T{(x)} \le 1$.
    Thus, we have $\D_{T'}(x)-\D_T{(x)} \le 1 \le \D_{T}(H(x))+1$.
    \end{itemize}

   \item
    Let $I(x) = T$. Since $I(x)$ is defined, $x$ is associated with $I(x)$.
    In the same way as in the case for $H(x)=T$, 
    we get $\D_{T'}(x)-\D_T{(x)} \le 2 \le \D_{T}(H(x)) + 1 \le \D_{T}(H(x)) + \D_{T}(I(x)) +1$ in the case that we associate $x$ with both $H(x)$ and $I(x)$,
    and we get $\D_{T'}(x)-\D_T{(x)} \le 1 \le \D_{T}(I(x))+1$ in the case that we only associate $x$ with
    \rhnote*{deleted "only one of $H(x)$"}{%
    $I(x)$.
    }%
   \end{itemize}
    
    \begin{table}[h]
      \centering
      \caption{Upper bounds for each case of Lemma~\ref{lem:dt2}.} 
      \label{inequality}
      \fontsize{9pt}{10pt}\selectfont
      \begin{tabular}{|c|c|c|} \hline
        & When $H(x) \ne T \land I(x) \ne T$ & When $H(x) = T \lor I(x) = T$ \\ \hline
        $\D_{T'}(x)-\D_T{(x)} \: (x \in \Qn)$ & $\leq \D_T{(H(x))}+\D_T{(I(x))}$ & $\leq \D_T{(H(x))}+\D_T{(I(x))}+1$ \\\hline
        $\D_{T'}(x) \: (x \in \Ntwo)$ & $\leq 2(\D_T{(H(x))}+\D_T{(I(x))})$ & $\leq 2(\D_T{(H(x))}+\D_T{(I(x))})+2$ \\ \hline
      \end{tabular}
    \end{table}


  \noindent {\large \textbf{Wrapping up:}}    
  Table~\ref{inequality} summarizes the bounds obtained above.
  For simplicity,
  let $\D_T{(I(x))} = 0$ when $I(x)$ is undefined,
  and let $\D_T{(H(x))} = 0$ when $H(x)$ is undefined.
  Note that this does not affect our upper bound analysis,
  since no maximal repeats in $T'$ are associated to the undefined $H(x)$'s and $I(x)$'s.
  By Lemma~\ref{lem:H_xH_y}, there is at most one string $x$ such that $H(x) = T$ or $I(x) = T$.
  Thus, by using Lemma~\ref{lem:H_xH_y} and summing up the values in Table~\ref{inequality}, 
  we obtain $\sum_{x\in \Ntwo}\D_{T'}(x) + \sum_{x\in \Qn}(\D_{T'}(x)-\D_T(x)) \le 2\size+2$.
  Also, since the number of out-edges of $x \in \Qnotn$ does not increase,
  we get
  $\sum_{x\in \Qnotn}\D_{T'}(x) + \sum_{x\in \Qn}\D_{T}(x) \le 
  \sum_{x\in \Qnotn}\D_{T}(x) + \sum_{x\in \Qn}\D_{T}(x) \le
  \sum_{x\in \Q}\D_{T}(x) \le \sum_{x\in \M(T)}\D_{T}(x) = \size$.
  By adding $\sum_{x\in \Ntwo}\D_{T'}(x) + \sum_{x\in \Qn}(\D_{T'}(x)-\D_T(x)) \le 2\size+2$, 
  we get $\sum_{x \in \Ntwo \cup \Q}\D_{T'}(x) \le 3\size+2$.
  \end{proof}

\section{Upper bound for total out-degrees of nodes w.r.t. $\Nv$}
\label{sec:upper_bound_3}

In this section, we show an upper bound for the total out-degrees of nodes corresponding to strings that are elements of $\Nv  \subseteq \M(T')$.

We first describe useful properties of
strings $x \in \Nv$.

\begin{definition}
    For any $x \in \Nv$, let $J_x$ be the string that is obtained by removing $P_{x_R}$ and $S_{x_L}$ from $x$, namely $x = P_{x_R} J_x S_{x_L}$.
  \end{definition}

Note that, by the definition of Type $\rm{(v)}$,
each $x \in \Nv$ has two or more crossing occurrences in $T'$.
Hence $J_x$ always exists (possibly the empty string).
See Figures~\ref{fig:J_x} and~\ref{fig:J_x2}.

  \begin{figure}[H]
    \centering
    \includegraphics[keepaspectratio,scale=0.35]{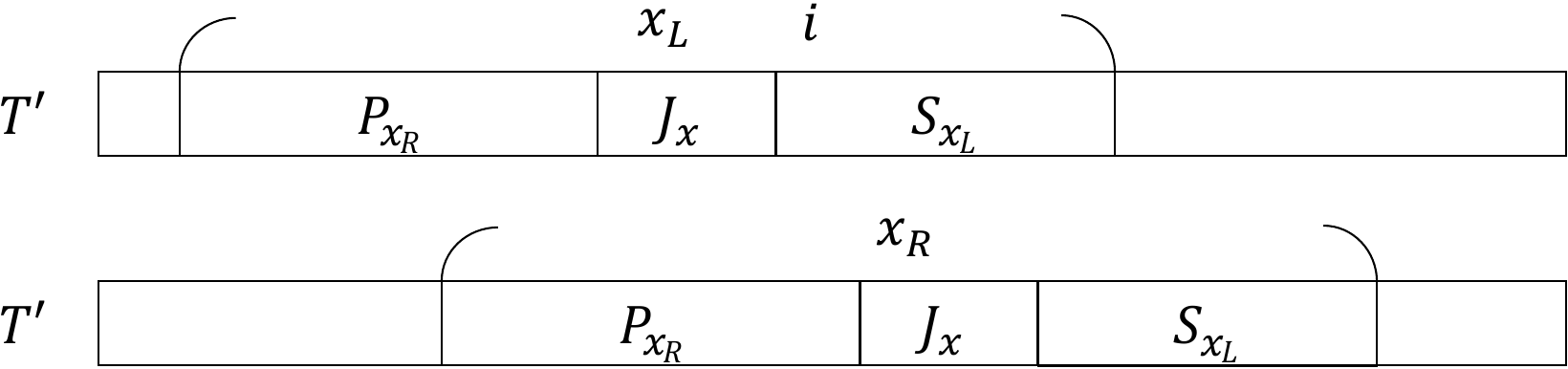}
    \caption{Illustration for $J_x$ in case of insertions and substitutions.
    }
    \label{fig:J_x}
  \end{figure}

  \begin{figure}[H]
    \centering
    \includegraphics[keepaspectratio,scale=0.35]{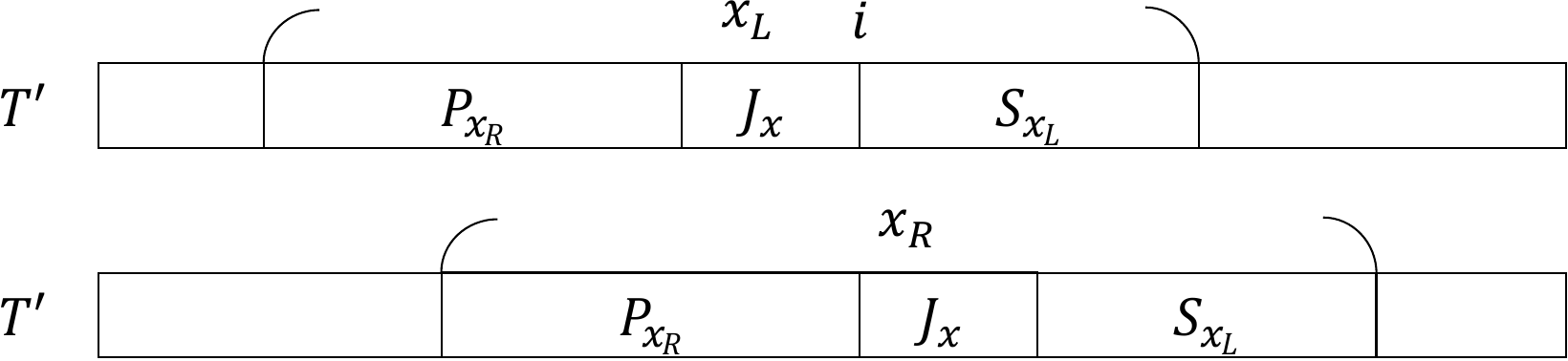}
    \caption{Illustration for $J_x$ in case of deletions.}
    \label{fig:J_x2}
  \end{figure}

  \begin{lemma}\label{lem:J_xJ_y}
    For any $x, y \in \Nv$ with \sinote*{added}{$x \neq y$}, $J_x \neq J_y$.
  \end{lemma}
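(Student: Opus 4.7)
We proceed by contradiction. Suppose $J_x = J_y$ for distinct $x, y \in \Nv$. Let $m = |J_x| = |J_y|$; then both $J_x$ and $J_y$ equal the substring $T'[i+1..i+m]$ (as determined by their rightmost crossing occurrences). Write $r_z = |P_{z_R}|$ and $l_z = |S_{z_L}|$ for $z \in \{x,y\}$, so that $z = P_{z_R} J_z S_{z_L}$, where $P_{z_R}$ is the length-$r_z$ substring of $T'$ ending at $i$ and $S_{z_L}$ is the length-$l_z$ substring starting at $i$. If $r_x = r_y$ and $l_x = l_y$, then $P_{x_R} = P_{y_R}$ and $S_{x_L} = S_{y_L}$, giving $x = y$, a contradiction. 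Hence $(r_x, l_x) \neq (r_y, l_y)$.

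First I would show $l_x = l_y$. Suppose WLOG $l_x < l_y$. Since $y_L$ and $y_R$ start at positions differing by $m+1$, $T'$ satisfies the periodicity $T'[b] = T'[b+m+1]$ for $b \in [i - r_y - m, i + l_y - 1]$. At $b = i + l_x$ this gives $T'[i+l_x] = T'[i+m+l_x+1]$, which are exactly the right-extending characters of $x$ at the crossing occurrences other than $x_R$ and at $x_R$, respectively. By the defining condition of $\Nv$, every right-extension of $x$ in $T'$ comes from a crossing occurrence, so $x$ has only one right-extension character. For $x \in \M(T')$ to remain right-maximal, $x$ must then belong to $\Suffix(T')$. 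This forces either $x_R$ to end at $|T'|$ (impossible, since $y_R$'s end $i + m + l_y$ would exceed $|T'|$) or $x$ to be a suffix via a non-crossing occurrence; the latter, through the correspondence between non-crossing occurrences in $T'$ and occurrences in $T$, yields $x \in \Suffix(T) \subseteq \RightM(T)$, contradicting $x \in \Nthree$. Hence $l_x \geq l_y$, and by symmetry $l_x = l_y$.

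Next I would show $r_x = r_y$. Suppose WLOG $r_x < r_y$. Since $l_x = l_y$, $x$ is a proper suffix of $y$; writing $w = T'[i+1-r_y..i-r_x]$ we have $y = wx$, and the last character of $w$ is $T'[i-r_x]$. Applying $y$'s periodicity at $b = i - r_x - m - 1$ yields $T'[i - r_x - m - 1] = T'[i - r_x]$, so the crossing preceders of $x$ at $x_L$ and $x_R$ coincide. Left-maximality of $x \in \M(T')$ then forces a non-crossing occurrence of $x$ preceded by some $d \neq T'[i-r_x]$ (the case $x \in \Prefix(T')$ is excluded, for otherwise $y_L$ would start before position~$1$). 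Since $x \notin \LeftM(T)$ and non-crossing occurrences of $x$ in $T'$ correspond to occurrences of $x$ in $T$, every occurrence of $x$ in $T$ is preceded by $d$. Consequently, if $y$ occurred in $T$, then the embedded occurrence of $x$ would be preceded by the last character $T'[i-r_x]$ of $w$, contradicting the unique preceder $d$. Hence $y \notin \Substr(T)$, so all occurrences of $y$ in $T'$ are crossing.

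The main obstacle lies in closing Step 2: deriving a contradiction from the existence of such a $y$. Because $\Nv$ constrains only right-extensions, the argument from Step 1 does not transfer symmetrically to the left. To close the argument, I would split on $y$'s left-maximality in $T'$: either $y_L$ is at position~$1$, so $y \in \Prefix(T')$, which combined with the fixed preceder $d$ of $x$ in $T$ and the forced character $T'[i - r_x]$ should produce an inconsistency by tracking the structure of $T$ that gives rise to $T'$; or $y_L$ is not at position~$1$, in which case the two crossing preceders $T'[i - r_y - m - 1]$ and $T'[i - r_y]$ of $y$ must be distinct, and I would exploit $y$'s periodicity together with the behavior of $T'$ just outside the periodicity range to pinpoint a contradiction. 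The delicate combinatorial step bridging $d$, $T'[i-r_x]$, and the left boundary of $y$ is the hardest part I anticipate.
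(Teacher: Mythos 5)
Your overall strategy---show $S_{x_L} = S_{y_L}$ first and then $P_{x_R} = P_{y_R}$, via an occurrence/periodicity comparison between $x$ and $y$---matches the paper's proof, but your Step 2 is genuinely incomplete, and you say so yourself. The missing ingredient is that membership in Type $\rm{(v)}$ already guarantees that the characters immediately \emph{before} $x_L$ and $x_R$ are distinct, not only the characters immediately after. Condition (2) of Type $\rm{(v)}$ says that $x$ would be maximal in $T'$ if its only occurrences were the crossing ones, and by the periodicity remark following the type definitions (all crossing occurrences except $x_L$ share their left-extension, all except $x_R$ share their right-extension), this is precisely the statement that the left-extensions at $x_L$ and $x_R$ differ and the right-extensions at $x_L$ and $x_R$ differ; the paper's figure of the five types spells this out with $a \neq c$ and $b \neq d$. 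With that fact in hand, your computation $T'[i - r_x - m - 1] = T'[i - r_x]$ is already the desired contradiction, so Step 2 closes exactly as symmetrically as Step 1, and the whole detour through $y \notin \Substr(T)$, the unique preceder $d$, and the case split on $y$'s left boundary is unnecessary. For the same reason Step 1 could be shortened: once $T'[i + l_x] = T'[i + m + l_x + 1]$ is obtained, Type $\rm{(v)}$ immediately gives the contradiction, without invoking the extra $\Nv$ right-extension clause or the $\Suffix(T')$ case analysis (though your explicit treatment of the suffix edge case, and the observation that $y_R$ would overrun $|T'|$, is careful and worth retaining). In short, the plan is right and Step 1 is sound, but Step 2 is a real gap, caused by reading the Type $\rm{(v)}$ condition more weakly than the paper does.
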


  \begin{proof}
    For a contrary, suppose that there exist $x, y \in \Nv$ such that $x \neq y$ and $J_x = J_y$.
    Since $x$ is of Type $\rm{(v)}$, the characters immediately after $x_L$ and $x_R$ are different and let $a, b$~($a \neq b$) be these characters, respectively.
    If $|S_{x_{L}}|<|S_{y_{L}}|$, then both $S_{x_{L}}a$ and $S_{x_{L}}b$ must be prefixes of $S_{y_{L}}$ (see Figures~\ref{fig:J_xJ_y} and~\ref{fig:J_xJ_y2}), which contradicts that $a \neq b$.
    The other case where $|S_{x_{L}}| > |S_{y_{L}}|$ also leads to a contradiction.
    Hence $S_{x_{L}}=S_{y_{L}}$.
    Also, $P_{x_{R}}=P_{y_{R}}$ follows in a symmetric manner.
    These imply $x=y$, which is a contradiction.
  \end{proof}

  \begin{figure}[H]
    \centering
    \includegraphics[keepaspectratio,scale=0.35]{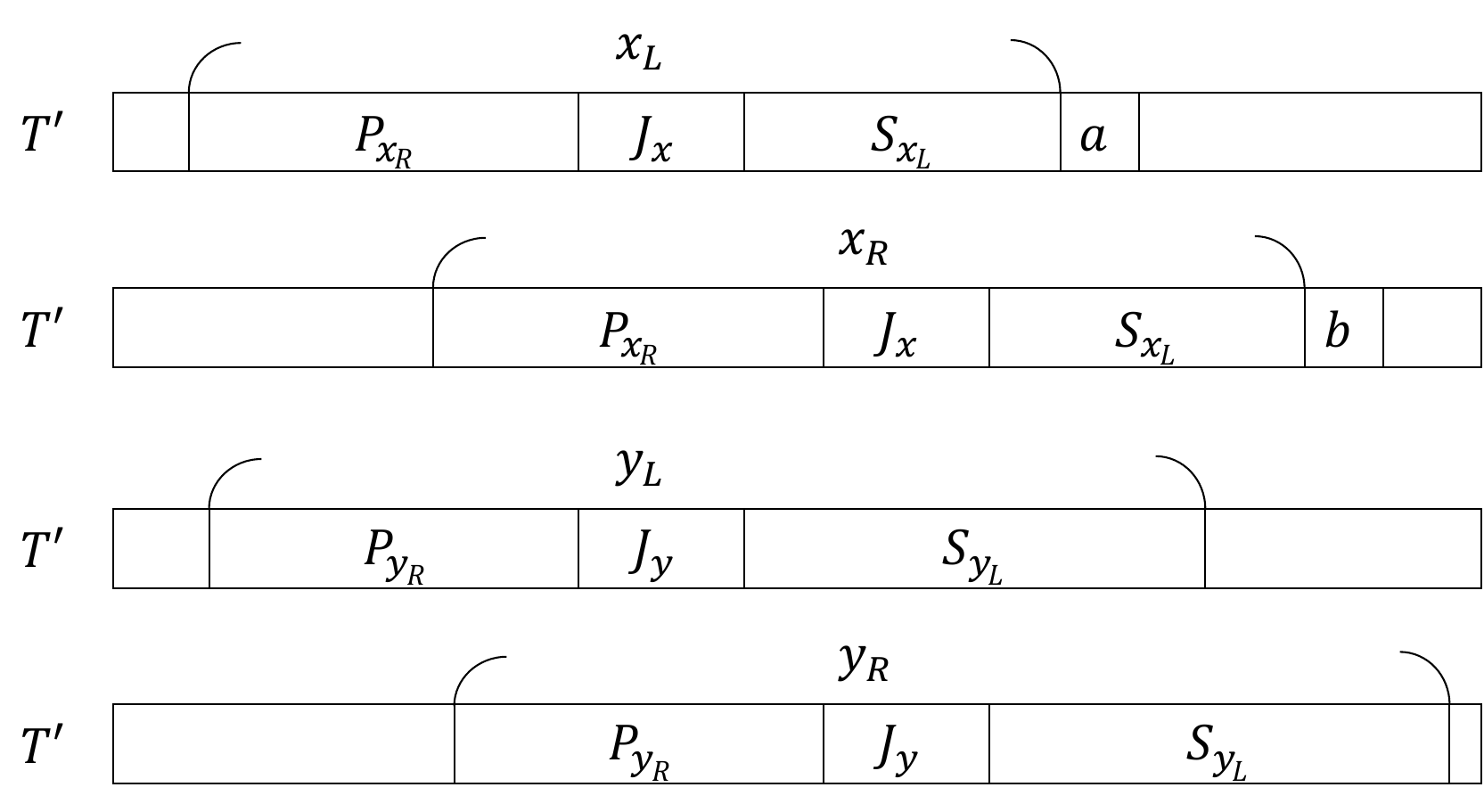}
    \caption{Illustration for Lemma~\ref{lem:J_xJ_y} in case of insertions and substitutions, where $J_x = J_y$.
    }
    \label{fig:J_xJ_y}
  \end{figure}

  \begin{figure}[H]
    \centering
    \includegraphics[keepaspectratio,scale=0.35]{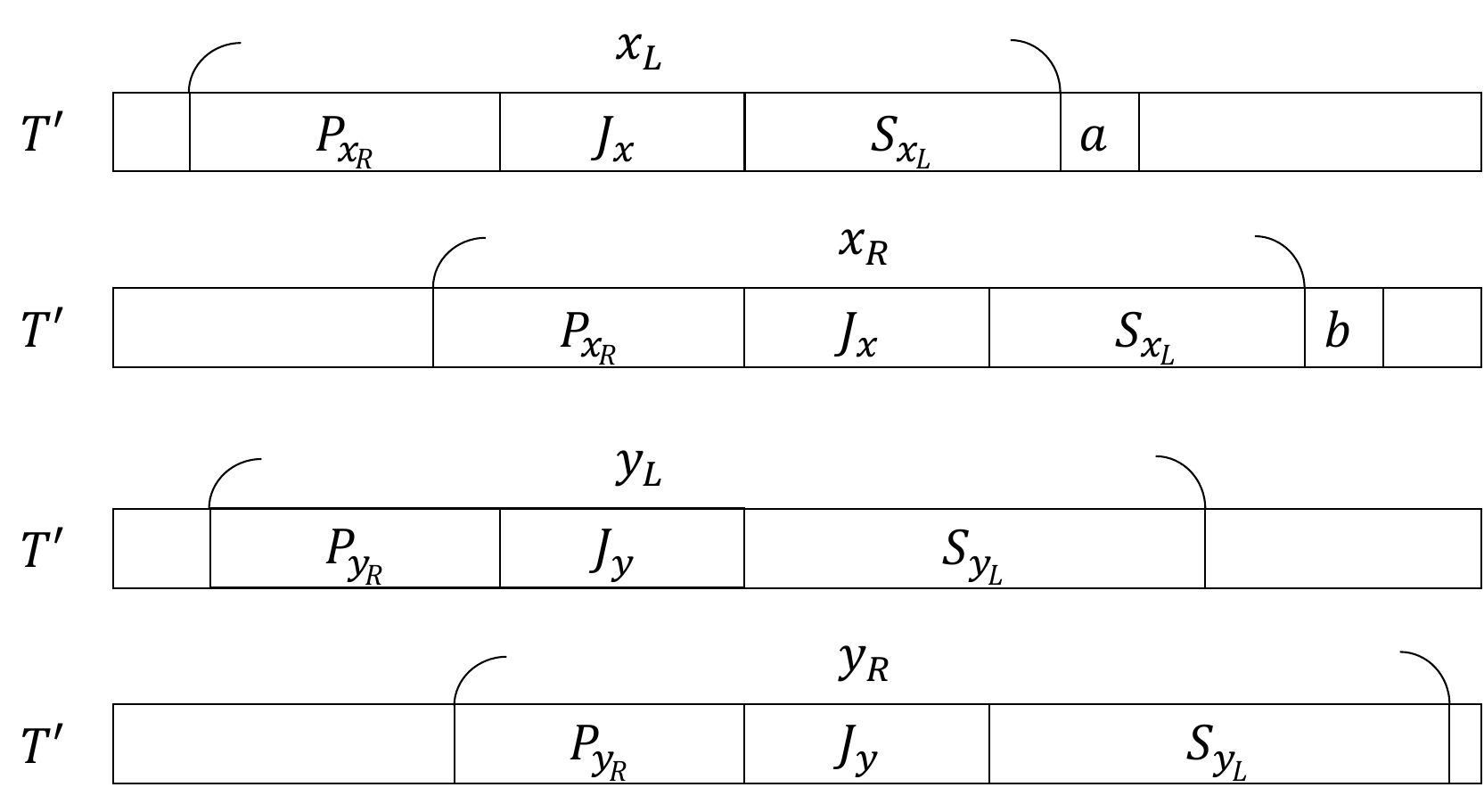}
    \caption{Illustration for Lemma~\ref{lem:J_xJ_y} in case of deletions, where $J_x = J_y$.}
    \label{fig:J_xJ_y2}
  \end{figure}

  \subsection{Correspondence between $\Nv$ and $\M(T)$}
  For any $x \in \Nv$, we associate $x$ with $J_x$.
  For any $x \in \Nv$
  we define $K(x)$ to which $x$ corresponds, by using $J_x$,
  as follows:

  \begin{definition}\label{def:K_x}
    For any $x \in \Nv$, let $K(x)=\lrep_T({\rrep_T({J_x})}) = \rrep_T({\lrep_T({J_x})})$ (see also Figure~\ref{fig:K_x} for illustration).
  \end{definition}
  We note that $K(x)$ is well defined since $J_x$ is a substring of $T$.

  \begin{figure}[H]
    \centering
    \includegraphics[keepaspectratio,scale=0.35]{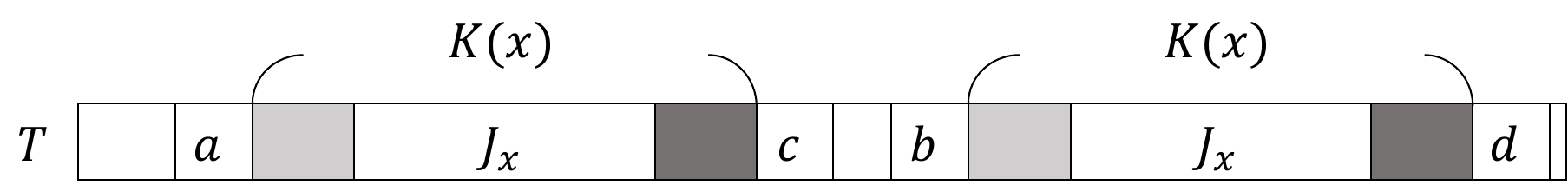}
    \caption{Illustration for Definition~\ref{def:K_x}, where $a\neq b$ and $c\neq d$.}
    \label{fig:K_x}
  \end{figure}

  \begin{lemma}\label{lem:K_xK_y}
    For any $x, y \in \Nv$ \sinote*{added}{with $x \neq y$}, $K(x)\neq K(y)$.
  \end{lemma}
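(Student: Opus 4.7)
The plan is to argue by contradiction, following the pattern of Lemma~\ref{lem:U_xU_y} but with both the left and right extensions in play simultaneously. Suppose $K(x)=K(y)=:M$ for some distinct $x,y\in\Nv$. By Lemma~\ref{lem:J_xJ_y}, we have $J_x\neq J_y$, so without loss of generality assume $|J_x|<|J_y|$. Using Definition~\ref{def:K_x}, write $M=\alpha_x J_x\beta_x=\alpha_y J_y\beta_y$, where $\rrep_T(J_x)=J_x\beta_x$, $\lrep_T(J_x\beta_x)=M$, and analogously for $y$. From the length identity $|\alpha_x|+|J_x|+|\beta_x|=|\alpha_y|+|J_y|+|\beta_y|=|M|$ I deduce $|\alpha_x|+|\beta_x|>|\alpha_y|+|\beta_y|$, so at least one of $|\alpha_x|>|\alpha_y|$ or $|\beta_x|>|\beta_y|$ holds.

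The proof then splits into cases according to how the interior strings $J_x$ and $J_y$ sit inside the common string $M$. In the nested sub-case where $|\alpha_x|\geq|\alpha_y|$ and $|\beta_x|\geq|\beta_y|$ (with at least one strict), $J_x$ appears as a substring of $J_y$, so $J_x\beta_x$ is a suffix of $J_y\beta_y=\rrep_T(J_y)$ and $\alpha_x J_x$ is a prefix of $\alpha_y J_y=\lrep_T(J_y)$. Since $J_y\beta_y$ is right-maximal in $T$, its suffix $J_x\beta_x$ inherits right-maximality (occurrences of $J_y\beta_y$ are in particular occurrences of $J_x\beta_x$ with the same following characters), and symmetrically for $\alpha_x J_x$. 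I plan to exhibit a left- or right-extension of $J_x$ shorter than $\alpha_x J_x\beta_x$ that is already left- and right-maximal in $T$, contradicting the minimality built into the definition of $K(x)$.

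The main obstacle I anticipate is the partially overlapping configuration $|\alpha_x|>|\alpha_y|$ together with $|\beta_x|<|\beta_y|$ (or its mirror), in which $J_x$ is not a substring of $J_y$ and the direct prefix/suffix transfer of maximality does not apply. For this case I plan to exploit the two-sided identity $\lrep_T(\rrep_T(\cdot))=\rrep_T(\lrep_T(\cdot))$ noted in Definition~\ref{def:K_x}, together with the $\Nv$ structure $x=P_{x_R}J_xS_{x_L}$ and $y=P_{y_R}J_yS_{y_L}$, to constrain the two decompositions and reduce the overlapping configuration to the nested one (or derive a direct contradiction with the minimality in $\lrep_T$ or $\rrep_T$). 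Exhausting all cases then yields $K(x)\neq K(y)$.
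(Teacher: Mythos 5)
Your plan omits the one structural fact that actually drives the lemma, and without it the argument cannot be completed. The key observation is that $J_x$ is not an arbitrary substring of $K(x)$: by construction $J_x$ occurs in $T'$ at both $T'[i+1\,..\,i+|J_x|]$ and $T'[i-|J_x|\,..\,i-1]$, i.e.\ it occupies the two positions immediately to the right and to the left of the edit, and these occurrences survive unchanged in $T$. The same holds for $J_y$. Consequently, once $|J_x|<|J_y|$ (and the lengths must differ, since $J_x=T'[i+1\,..\,i+|J_x|]$ would otherwise force $J_x=J_y$), the string $J_x$ is simultaneously a \emph{proper prefix} and a \emph{proper suffix} of $J_y$. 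This gives $\occ_T(J_x)>\occ_T(J_y)$: every starting position of $J_y$ is one of $J_x$, and the rightmost occurrence of $J_y$ contributes an additional starting position of $J_x$ (at offset $|J_y|-|J_x|$) that cannot be a starting position of $J_y$. Since $\lrep_T$ and $\rrep_T$ preserve the number of occurrences, $\occ_T(K(x))=\occ_T(J_x)>\occ_T(J_y)=\occ_T(K(y))$, and hence $K(x)\neq K(y)$. This is what the paper's proof is (tersely) saying with ``$J_x$ occurs at least twice in $J_y$.''

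Your case analysis on how $J_x$ and $J_y$ sit inside $M=K(x)=K(y)$ never invokes this centering around the edit, and the lemma is simply false for an unconstrained pair $(J_x,J_y)$: take $T=\mathrm{abcabc}$, $J_x=\mathrm{b}$, $J_y=\mathrm{abc}$; then $J_x$ is a substring of $J_y$ with $|J_x|<|J_y|$, yet $\lrep_T(\rrep_T(J_x))=\lrep_T(\rrep_T(J_y))=\mathrm{abc}$. This also shows that the contradiction you aim for in the ``nested'' sub-case does not exist: the prefix $\alpha_x J_x$ and suffix $J_x\beta_x$ of $M$ being left-/right-maximal is entirely consistent with the definition of $K(x)$ (indeed it is forced by it, since left-maximality passes to prefixes and right-maximality to suffixes), so no ``shorter bilateral maximal extension'' can be exhibited. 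Finally, the ``partially overlapping'' configuration you worry about cannot arise at all once the centered structure is used, so there is nothing to reduce; the whole case split is a detour. The fix is to replace the decomposition-of-$M$ analysis with the prefix-and-suffix observation above and then compare occurrence counts.
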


  \begin{proof}
    Suppose that there exist $x, y \in \Nv$ such that $x \neq y$ and $K(x)= K(y)$,
    and without loss of generality that $|J_x| \leq |J_y|$.
    Then, $J_x$ occurs at least twice in $J_y$.
    Therefore, $K(x) \neq K(y)$, however, this is a contradiction.
  \end{proof}

  \subsection{Upper bound w.r.t. $\Nv$}

  \begin{lemma} 
    \label{lem:dt3}
    $\sum_{x \in \Nv}\D_{T'}(x) \le 2\size$.
  \end{lemma}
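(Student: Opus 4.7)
The plan is to combine three observations, leveraging Lemma~\ref{lem:K_xK_y}. First, I would establish the uniform per-node bound $\D_{T'}(x)\le 2$ for every $x\in\Nv$. By the defining property of $\Nv$, every right-extension of $x$ in $T'$ comes from a crossing occurrence of $x$, and by the remark at the end of Section~3 the characters immediately after the crossing occurrences of $x$ other than $x_R$ all coincide; hence at most two distinct characters can follow a crossing occurrence of $x$ in $T'$.

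The main step is to show $K(x)\ne T$ for every $x\in\Nv$. Since $x$ has at least two crossing occurrences (Type~$\rm{(v)}$), the decomposition $x=P_{x_R}J_xS_{x_L}$ places one occurrence of $J_x$ inside $x_L$ entirely to the left of the edit position $i$ in $T'$, and another inside $x_R$ entirely to the right of $i$. A short, edit-type-specific index translation turns these into two distinct occurrences of $J_x$ in $T$, and because the right-extension step inside $\rrep_T$ applies a common following character to every current occurrence at once, $w=\rrep_T(J_x)$ inherits these two distinct starting positions $q_1<q_k$ in $T$. Taking the prefix $T[1..q_1+|w|-1]$ as a witness of a left-maximal left-extension of $w$ gives $|\lrep_T(w)|\le q_1+|w|-1$; combined with $q_1<q_k\le|T|-|w|+1$, this yields $|\lrep_T(w)|<|T|$ and hence $K(x)=\lrep_T(w)\ne T$. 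The degenerate sub-case $J_x=\varepsilon$ gives $K(x)=\varepsilon\ne T$ directly, and any edit at the extreme boundary of $T$ admits only a single crossing occurrence, so $\Nv=\emptyset$ there.

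Finally, Lemma~\ref{lem:K_xK_y} makes $K$ injective on $\Nv$, and the previous step gives $K(\Nv)\subseteq\M(T)\setminus\{T\}$. Since every $y\in\M(T)\setminus\{T\}$ has $\D_T(y)\ge 1$, this yields $|\Nv|=|K(\Nv)|\le|\M(T)\setminus\{T\}|\le\sum_{y\in\M(T)\setminus\{T\}}\D_T(y)=\size$, and combined with $\D_{T'}(x)\le 2$ the claim $\sum_{x\in\Nv}\D_{T'}(x)\le 2|\Nv|\le 2\size$ follows. The main obstacle is the second step: the edit-type-specific translation of the two $J_x$-occurrences from $T'$ into $T$, and the extraction of the strict inequality that forces $K(x)\ne T$. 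Without this tightening the same template would yield only $2\size+2$, matching the additive slack appearing in Lemmas~\ref{lem:dt1} and~\ref{lem:dt2}.
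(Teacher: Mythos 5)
Your proof is correct and follows essentially the same route as the paper: bound $\D_{T'}(x)\le 2$ via the $\Nv$ definition, show $K(x)$ lands in $\M(T)\setminus\{T\}$ so that $\D_T(K(x))\ge 1$, and finish by the injectivity of $K$ from Lemma~\ref{lem:K_xK_y}. The paper compresses the middle step into the single remark that ``$K(x)$ occurs at least twice in $T$,'' while you unpack it with the explicit index translation giving two occurrences of $J_x$ (one ending at $i-1$, one starting after the edit) and the prefix-witness bound $|\lrep_T(\rrep_T(J_x))|<|T|$; this is a faithful elaboration of the same argument, not a different one.
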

  
  \begin{proof}
    By the definition of $\Nv$,
    there is no other right-extension of $x$ in $T'$ than the right-extension(s) of the crossing occurrence(s) of $x$.
    Thus 
    there are at most two distinct characters immediately after $x$ in $T'$.
    Since $K(x)$ occurs at least twice in $T$, $\D_T(K(x)) \geq 1$.
    Hence, we get $\D_{T'}(x) \leq 2 \leq 2\D_T(K(x))$.
    Therefore, we have $\sum_{x \in \Nv}\D_{T'}(x) \le 2\size$ by Lemma~\ref{lem:K_xK_y}.
  \end{proof}

\section{Conclusions}
This paper proved that the worst-case multiplicative sensitivity
of CDAWGs is asymptotically at most 8.
Our analysis is based fully on new combinatorial properties
of maximal repeats and their right-extensions,
that are incidental to an edit operation on the strings.
The only known lower bound for the multiplicative sensitivity of CDAWGs
is asymptotically 2~\cite{FujimaruNI25}.
It is intriguing future work to close the gap between
these upper bound and lower bound.

\bibliographystyle{abbrv}
\bibliography{ref}

\end{document}